\documentclass[fleqn,acmsmall,screen,nonacm]{acmart}

\usepackage{macros}

\title{Hippogriff: a semantic approach to uniting core and modules}
\subtitle{With specification as second-order generalized algebraic theory equipped with phase distinction}
\author{Owen Lynch}
\affiliation{%
  \institution{University of Oxford}
  \country{United Kingdom}}
\affiliation{%
  \institution{Topos Research UK}
  \country{United Kingdom}}
\author{Sam Staton}
\affiliation{%
  \institution{University of Oxford}
  \country{United Kingdom}}

\begin{document}

\maketitle

\section{Introduction}

\subsection{Motivation}

Abstraction features in statically typed programming languages allow the programmer to work generically with respect to a collection of types equipped with certain operations. The module system in ML-family languages is a classic approach to this, but typeclasses in Haskell, traits in Rust, or the dependent object type calculus in Scala can all be used to similar effect \cite{milner-1997-definition,wadler-1989-how,klabnik-2026-rust,amin-2016-essence}.

Each of these features must solve a similar problem: with more expressive systems that use type variables comes a greater difficulty in checking equality of types. Typically, this problem is solved by a syntactic distinction between the ``abstraction level'' (modules, typeclasses, etc.) and the ``value level.'' However, this necessitates a bifurcation of the language into ``abstraction level'' and ``value level'' programming, typically duplicating features of the language (like functions) into type-level functions (such as generic types) and normal functions, which use completely different syntax.\footnote{We refrain here from using the more precise term ``phase distinction'' out of fear that it may not apply in generality to the gamut of languages (Haskell, Scala, Rust) that we wish to also talk about; the interested reader may refer to \cite{harper-1987-type} for a precise account of a \emph{syntactic} phase distinction.}

Dependent languages like Agda, Idris or Lean take a different approach: ``universes'' (types of types) are built in which allow working with types at the ``value'' level \cite{norell-2007-practical,brady-2017-typedriven,christiansen-2023-functional}. Thus, the distinction between type-level programming and normal programming is eliminated. However, this unification comes at a heavy cost: arbitrary program segments must now be evaluated at typechecking time. This means that design choices like purity and strict termination are forced on the language. While some view this requirement as a feature, at the very least this seems to rule out the use of such a unification between type-level and normal programming for a language which wishes to use general recursion.

In 1ML, Rossberg proposes a rethinking of a ML-style module system to use a unified syntax between the ``core level'' and the ``module level'' \cite{rossberg-2018-1ml}. Syntactically, this looks like a dependent language, however instead of typechecking it as a dependent language it is typechecked by first performing a purely syntax-directed translation to System F, and then typechecking the resulting System F terms. While this translation to System F is good in the sense that it shows that there is ``no extra power'' in the seemingly fancier 1ML syntax, it is problematic from a user perspective, because type errors cannot explain what went wrong in terms of the conceptual model that the user has of the system; the user must essentially ``reconstruct'' in their head the translation in order to understand type-checking errors.

Another approach to a module system with unified syntax is Sterling and Harper's ModTT \cite{sterling-2021-logical}. ModTT and 1ML are both, in a sense, interpretations of Moggi's semantics for Standard ML modules in the Grothendieck construction of an indexed-category treatment of System F\(_\omega\) \cite{moggi-1991-categorytheoretic}. However, while 1ML sticks with the System F semantics, ModTT just takes the indexed category as a jumping-off point and tries to simply understand from first principles how to describe the associated fibration internally to a type theory. The result is the ``synthetic phase distinction,'' which is simply a proposition in the type theory that can be interpreted as the statement ``it is currently type-checking time.'' Then one can postulate that \(A = A'\) if and only if \(A = A'\) under the assumption of this proposition, and then moreover under the assumption of this proposition, all terms at the ``value-level'' are equal. This seems like a strange assumption, but by how we have constructed our type theory, all functions \(\ms{Int} \to \ms{Type}\) must in fact be constant functions, so we never need to resolve the equality of their argument.

In this paper, we present a specification and implementation of Hippogriff,\footnote{The name ``Hippogriff'' was chosen because the language is a hybrid between a ML-style language (often named after hoofed mammals, e.g. OCaml, OxCaml), and a dependent language (often named after birds, e.g. Agda, Rocq).} a language blending the approaches of 1ML and ModTT. Like 1ML and unlike ModTT, Hippogriff can be given semantics in System F. However, unlike 1ML, the type theory for Hippogriff is defined \emph{before} the translation to System F, so it is possible to write a type checker that explains failure to type check in terms of the syntax that the user actually typed in.

The trick is to use the synthetic phase distinction from Sterling and Harper \emph{in the metatheory}, while leaving it out of the actual type theory. This allows the metatheory to ``talk about'' the split-context nature of System F, while retaining a single-context, unified syntax treatment. There is then a very direct connection between the rules we write down in the unified style and the original (quite complex) rules given in the split-context style in Harper, Mitchell and Moggi \cite{harper-1990-higherorder}.

We then can straightforwardly adapt Coquand's algorithm to handle the resulting dependent type theory with phase distinction, and the resulting language ends up being fairly simple to implement.

\subsection{Overview}

The thesis of this paper is that the synthetic methods of Logical Relations as Types can be used to build a practical programming language. As an additional contribution, we show that such a programming language has a much closer connection to System F than previously thought.

This paper splits cleanly into two relatively self-contained halves. The first half of the paper, consists of \S\ref{sec:hippogriff}, an informal presentation of Hippogriff, and \S\ref{sec:implementation}, a discussion of the implementation and its connections to other programming languages. This half is meant to be readable for someone with a background in functional programming, but not necessarily a background in category theory, though the implementation section might require some experience with type theory.

The second half is a mathematical exposition of the theory that the type theory for Hippogriff is built on. We start with a review of relevant background material on mathematical semantics for dependent type theory in \S\ref{sec:preliminaries}. Then in \S\ref{sec:psogats} we describe the meta-framework in which the type theory for Hippogriff is described, and how type theories in this meta-framework relate to split-context type theories like System F. This meta-framework is an extension of Uemura's ``second-order generalized algebraic theories'' that adds a synthetic phase distinction \cite{uemura-2021-abstract}. We give a brief overview of second-order generalized algebraic theories in \S\ref{sec:sogat-intro}. The full type theory for Hippogriff is then included as a supplemental appendix.

\section{Hippogriff} \label{sec:hippogriff}

We begin the paper with an overview of Hippogriff, the programming language whose implementation accompanies this paper. The full type theory for Hippogriff is specified in an appendix, which requires some involved theory developed in \S\ref{sec:psogats}, but the language can be intuited fairly straightforwardly for a user with some background in functional programming.

\subsection{Hippogriff feature by feature}

\subsubsection{Basics}

Hippogriff has basic types and functions, which can be written down in a syntax that should be familiar to functional programmers. For instance, a function which doubles its argument may be written down and used as follows.
\begin{lstlisting}
  def double (x : Int) : Int := x + x
  def quadruple (x : Int) : Int := double (double x)
\end{lstlisting}
This is in fact just a syntactic sugar over a more direct definition using lambdas.
\begin{lstlisting}
  def double : Int -> Int := x => x + x
  def quadruple : Int -> Int := x => x + x
\end{lstlisting}
Functions are first-class and curried, so we can have
\begin{lstlisting}
  def twice (f : Int -> Int) (x : Int) : Int := f (f x)
  def quadruple : Int -> Int = twice double
\end{lstlisting}

\subsubsection{Type definitions}

Types are syntactically just ordinary values in Hippogriff. This means that there is a type \lstinline{Type} whose values are types. In order to resolve the circularity that this implies, in fact the values of \lstinline{Type} are ``small types'' and \lstinline{Type} itself is a ``large type'', so we do not have \lstinline{Type : Type}. There is no ``type of large types.''

For instance, we write the polymorphic identity function as:
\begin{lstlisting}
  def id (a : Type) (x : a) : a := x
\end{lstlisting}
We can make type definitions using \lstinline{def} as well, such as
\begin{lstlisting}
  def Cont (s : Type) (a : Type) : Type := (a -> s) -> s
  def Cont-return (s : Type) (a : Type) (x : a) : Cont s a := f => f x
\end{lstlisting}

\subsubsection{Small record types}
We can declare and use small record types in the following way (later we discuss large record types).
\begin{lstlisting}
  def Pair (a : Type) : Type := sig
    fst : a
    snd : a
  end

  def dup (a : Type) (x : a) : Pair a := struct
    fst := x
    snd := x
  end

  def total (p : Pair Int) : Int := p.fst + p.snd
\end{lstlisting}
The use of \lstinline{sig} means that \lstinline{Pair} is a fresh type that \emph{behaves like} a record type, so if we then declare another type
\begin{lstlisting}
  def MyPair (a : Type) : Type := sig
    fst : a
    snd : a
  end
\end{lstlisting}
it is not the case that \lstinline{Pair Int} is equal to \lstinline{MyPair Int}, even though they behave in the same way.

\subsubsection{Sum types}

Sum types are declared in the same way as record types, and are fresh in the same way as record types.
\begin{lstlisting}
  def Maybe (a : Type) : Type := sum
    'just a
    'nothing
  end
\end{lstlisting}
We call each line in the definition of a sum type a ``variant'', and each variant has a ``tag'' which is the symbol starting with a single quote. A variant may have many arguments or none.

Elements of the sum type are constructed by applying the tag to arguments of the right type, as in the following
\begin{lstlisting}
  def just1 : Maybe Int := 'just 1
\end{lstlisting}

We can use elements of a sum type by matching on them as normal.
\begin{lstlisting}
  def unwrap (a : Type) (m : Maybe a) : a := match m
    'just x => x
    'nothing => abandon
  end
\end{lstlisting}
Here, \lstinline{abandon} is a way to halt execution of the program, similar to \lstinline{panic!} in Rust.

However, there is an important restriction for pattern matching: the result of a pattern match may not be an element of a large type. For instance, the following is disallowed because we are trying to produce elements of \lstinline{Type}, a large type, from a pattern match.
\begin{lstlisting}
  def method : Type := sum
    'land
    'sea
  end

  def paul-revere (m : method) : Type := match m
    'land => Unit
    'sea => Bool
  end
\end{lstlisting}
It turns out that with this restriction in place, there are in fact \emph{no} non-constant functions \lstinline{method -> Type}. In fact, more generally, there are no non-constant functions \lstinline{a -> Type} for any small type \lstinline{a}. This fact is key to making the type theory for Hippogriff decidable.

\subsubsection{Recursion} \label{sec:recursion}

Every top-level binding in Hippogriff is recursive, in the sense that the name of the thing being defined is in scope during the definition. However, there are restrictions to how that name may be recursively invoked, so that definitions like the following may be rejected.
\begin{lstlisting}
  def Dunno (a : Type) : Type := Dunno a
\end{lstlisting}

There are two cases in which the name may be recursively invoked. The first case is inside of a \lstinline{sum ... end} block, such as
\begin{lstlisting}
  def List (a : Type) : Type := sum
    'cons a (List a)
    'nil
  end
\end{lstlisting}

The second case is inside of a call to the \lstinline{rec (...)} primitive. However, this comes with a restriction like \lstinline{match ... end}; it can only produce elements of small types. Therefore,
\begin{lstlisting}
  def map (a : Type) (b : Type) (f : a -> b) (xs : List a) : List b := match
    'cons x xs => 'cons (f x) (rec (map a b f xs))
    'nil => 'nil
  end
\end{lstlisting}
is allowed, but
\begin{lstlisting}
  def Dunno (a : Type) : Type := rec (Dunno a)
\end{lstlisting}
is not. In future versions of Hippogriff, we may automatically infer the placement of \lstinline{rec}, but for now its explicitness is useful to explain the design.

\subsubsection{Large record types} \label{sec:large-record-types}

Hippogriff supports bundling together types and operations on them via large record types (record types where at least one field is large). For instance, here is the large type of a small type equipped with a less-than-or-equal predicate.
\begin{lstlisting}
  theory Monoid := sig
    t : Type
    unit : t
    op : t -> t -> t
  end
\end{lstlisting}
Note that we use the keyword \lstinline{theory} here instead of \lstinline{def}, and we don't have a type annotation. If we had a ``type of large types'' (which would then need to be larger than a large type), this would be equivalent to \lstinline{def Monoid : Theory := ...}.

Elements of \lstinline{Monoid} may be parameters to definitions produced with \lstinline{def}. For instance, we might use it in the following way.
\begin{lstlisting}
  def concat (m : Monoid) (xs : List m.t) : m.t := match xs
    'nil => m.unit
    'cons y ys => m.op y (rec (concat m ys))
  end
\end{lstlisting}

We can produce elements of large record types in the same way as we do elements of small record types, with \lstinline{struct}.
\begin{lstlisting}
  def Or : Monoid := struct
    t := Bool
    unit := 'false
    op := x => y => match x
      'true => 'true
      'false => y
    end
  end
\end{lstlisting}
Note that these behave like normal records in dependent type theory: we have beta reduction so \lstinline{BoolOrd.t = Bool}, and eta-expansion, so \lstinline{BoolOrd} is equal to any other record with the same fields.

No module system would deserve the name if it did not permit \emph{specialization}. In Hippogriff, this looks like the following
\begin{lstlisting}
  theory SemiRing := sig
    t : Type
    plus : Monoid ~ [ .t := t ]
    times : Monoid ~ [ .t := t ]
  end
\end{lstlisting}
An element of \lstinline{Monoid ~ [ .t := a ]} is a struct \lstinline{m} where \lstinline{m.t} is ``anything you like, so long as it is statically equal to \lstinline{a}.'' In the case of a ``purely static'' type like \lstinline{Type}, this is just \lstinline{a}, but if \lstinline{t} were of type \lstinline{Int}, it can be instantiated with any value.

\subsubsection{Large function types}

A large function type is a function type where the codomain is large. Note that all functions in Hippogriff are dependent, but when the domain is small the codomain type cannot actually depend on the domain.

Large function types can be used like functors in ML. For instance, we replicated the classic ``functional map'' module that 1ML gives as their first example as follows.
\begin{lstlisting}
  theory Eq := sig
    t : Type
    eq : t -> t -> Bool
  end

  theory Map := sig
    key : Type
    map : Type -> Type
    empty : (a : Type) -> map a
    add : (a : Type) -> key -> a -> map a -> map a
    lookup : (a : Type) -> key -> map a -> Maybe a
  end

  def LambdaMap (Key : Eq) : Map := struct
    key := Key.t
    map := a => Key.t -> Option a
    empty := _ => x => 'none
    add := _ => k => v => m => x => match (Key.eq k x)
      'true => 'some v
      'false => m x
    end
    lookup := _ => x => m => m x
  end
\end{lstlisting}

\subsection{Further examples}

In general, Hippogriff attempts to treat semantically and uniformly many features that are handled syntactically in other languages. The fact that these are treated semantically means that they are ``automatically compositional'' and this leads to some interesting features.

\begin{example}
  Recall \lstinline{sort} from \S\ref{sec:large-record-types}.
\begin{lstlisting}
  def sort (m : Ord) (xs : List m.t) : List m.t := ...
\end{lstlisting}
In \lstinline{sort} we pass \lstinline{m : Ord} as a function argument. The ability to do this is a feature that was only added to OCaml in version 5.5, but is just a natural consequence of how everything works by default in Hippogriff; it would require work to \emph{not} have it.
\end{example}

\begin{example}
Hippogriff naturally supports mutually recursive modules, at any level of nesting.
\begin{lstlisting}
  theory EqPair := sig
    even : Eq
    odd : Eq
  end

  def EvenAndOdd : EqPair := struct
    even := struct
      t := sum
        'zero
        'succ EvenAndOdd.odd.t
      end
      eq := x => y => match x
        'zero => match y
           'zero => 'true
           'succ _ => 'false
        end
        'succ x0 => match y
           'zero => 'false
           'succ y0 => rec (EvenAndOdd.odd.eq x0 y0)
        end
      end
    end
    odd := struct
      t := sum
        'succ EvenAndOdd.even.t
      end
      eq := x => y => match x
        'succ x0 => match y
           'succ y0 => rec (EvenAndOdd.even.eq x0 y0)
        end
      end
    end
  end
\end{lstlisting}
A full comparison of Hippogriff's recursive modules with recursive module features in OCaml or Standard ML has not been made yet, but we can say with confidence that no other language unifies the handling of recursive functions, recursive types, and recursive modules to this degree.
\end{example}

\section{Implementation} \label{sec:implementation}

The implementation for Hippogriff is included as a supplement to this paper. In this section, we discuss some of the design choices and algorithms in this implementation. In particular, we discuss the features in Hippogriff that are more or less ``off the shelf'' from the dependently typed literature, and then we discuss the deviations from the standard dependently typed setup that permit Hippogriff to work as it does. We finish with a comparison to similar languages.

\subsection{General design}

We start by discussing the general design of Hippogriff, via touchpoints in contemporary programming language research.

\subsubsection{Bidirectional elaboration}
One of the key deviations of Hippogriff from a typical statically typed general purpose functional language is the use of bidirectional elaboration instead of Hindley-Milner unification as the main typechecking algorithm. Bidirectional elaboration has a long history; we refer the reader to Dunfield and Krishnaswami for a comprehensive survey \cite{dunfield-2022-bidirectional}.

The ``elaboration'' part of bidirectional elaboration refers to the fact that the typechecking algorithm looks like \(\ms{elaborate} \colon \ms{Notation} \to \ms{Maybe}\,\ms{Syntax}\), rather than \(\ms{typecheck} \colon \ms{Notation} \to \ms{Bool}\), where ``notation'' is the term that we use to refer to the trees that the parser produces, and ``syntax'' is the term we use to refer to well-typed terms. One way of thinking about elaboration is that it is the ``parse, don't validate'' dictum \cite{king-2019-parse} applied to type checking.

The ``bidirectional'' part of bidirectional elaboration refers to the refinement of \(\ms{elaborate}\) into two mutually-recursive functions
\begin{align*}
  &\ms{syn} \colon \ms{Notation} \to \ms{Maybe}\,(\ms{Syntax}, \ms{Type}) \\
  &\ms{chk} \colon \ms{Notation} \to \ms{Type} \to \ms{Maybe}\,\ms{Syntax}
\end{align*}
The function \(\ms{syn}\) takes in notation and attempts to produce both syntax and a type for that syntax; we call this \emph{synthesis}. The function \(\ms{chk}\) takes in notation and a type and attempts to produce syntax that has that given type; we call this \emph{checking}.

Certain syntactic forms lend themselves naturally to one mode or the other. For instance, when type checking the expression \(f\,x\), we first attempt to \emph{synthesize} a type \(A \to B\) for \(f\), then \emph{check} that \(x\) has the type \(A\), and then finally \emph{synthesize} the type \(B\) for the whole expression.

When an expression naturally synthesizes a type \(A\) and we are checking that expression at the type \(A'\), we must check that \(A = A'\); this is why it is important to be able to decide the equality of types. We discuss our approach to type equality in \S\ref{sec:modification}.

Bidirectional elaboration has an interesting effect on the design of programming languages, because it opens up the opportunity to use notation that is only allowed in ``checking mode.'' For instance, in Hippogriff the syntax \lstinline{'just x} can only be used when checking against a sum type that has a variant with tag \lstinline{'just}. In synthesis mode, \lstinline{'just x} produces an elaboration error. This means that it is not a problem to have multiple sum types with the same tags, and it also maintains the invariant that each top-level definition only binds the name it looks like it is binding. At any point in the program that would be in synthesis mode, but we need to use a check-only notation, we can simply add a type annotation as in \lstinline{'just x : Maybe Int}, however this seems to occur rarely in practice.

\subsubsection{Nominal types}

Nominal types (types whose identity is given by their name, not their definition) are often an afterthought in textbook and paper accounts of type theories. Sum types are often only treated in the binary, anonymous case. However, they are essential in the practice of programming languages for a variety of reasons, such as readability of type errors and typeclass resolution.

In Hippogriff, we take an approach to nominal types inspired by Narya's kinetic and potential terms \cite{shulman-2026-narya}. A full account of this approach is in the appendix describing the type theory, but we give a brief overview here, which requires some familiarity with the jargon of normalization by evaluation to understand.

The core idea is that each top-level binding creates a fresh variable equipped with a \emph{behavior}. In the case of direct definitions like \lstinline{def MyInt : Type := Int}, the behavior of \lstinline{MyInt} is to \emph{become} \lstinline{Int}, and when \lstinline{MyInt} is \emph{reflected} into the domain of values, it just becomes \lstinline{Int}. But in the case of definitions like
\begin{lstlisting}
  def List (a : Type) : Type := sum
    'cons a (List a)
    'nil
  end
\end{lstlisting}
the behavior is to \emph{be described} by \lstinline{a => sum ... end}. Then when \lstinline{List} is reflected into the domain of values, it becomes a \emph{neutral value} equipped with a behavior, and when the neutral value gets eliminators added to its spine, the behavior is similarly eliminated, so that the behavior of \lstinline{List Int} is just \lstinline{sum ... end}.

We are not aware of any publication on this method for nominal types beyond the source code for Narya \cite{shulman-2026-narya}, but it seems to work quite nicely.

\subsubsection{Specialization}

Specialization in Hippogriff is treated with a variant of the \emph{static extent}, integrated into the record type constructor. The static extent is a variant of singleton types in the context of the synthetic phase distinction \cite{sterling-2021-logical}, so it fits within the tradition of treating specializations via singleton types. It is fairly easy to implement specialization because we already have the machinery for checking whether two values are statically equal as this is needed for deciding type equality.

\subsubsection{Recursion}

We gave an intuitive account of Hippogriff's approach to recursion in \S\ref{sec:recursion}; here we discuss the theory and implementation. The formal account of how this is added to the type theory is in the appendix.

The approach to recursion in Hippogriff is via a modality; this approach is often called ``guarded corecursion'' \cite{nakano-2000-modality,atkey-2013-productive}. The idea in guarded corecursion is that instead of the normal fixpoint function \(\ms{fix}_A \colon (A \to A) \to A\), we instead have a guarded fixpoint \(\ms{gfix}_A \colon (\tri A \to A) \to A\), where \(\tri A\) is a modality. The modality can be used to ensure that the fixpoint is always ``productive''; for instance if the fixpoint produces a stream then we can always ask for another element of that stream.

In Hippogriff, each top-level definition is implicitly wrapped in the guarded fixpoint. The ``productivity'' that we care about is that we can always ask for the behavior of a type.

\begin{example}
  Consider the following definition.
  \begin{lstlisting}
  def List (a : Type) : Type := sum
    'cons a (List a)
    'nil
  end
  \end{lstlisting}
  When we ask for the behavior of \lstinline{List a}, we get back its definition, which involves \lstinline{List a} again. Crucially, this ``recursive'' invocation of \lstinline{List a} is \emph{not} unfolded to its definition. We could repeat this process to ``unfold a stream of type definitions'', but each unfolding step terminates in finite time.
\end{example}

When a type is small, we never care about its value at typechecking time, so we can get out of the modality ``for free'' using \lstinline{rec}.

Note that there is no explicit modality in Hippogriff; we can achieve what we need only by the changes to context structure that would be needed to implement the modality. However, it could be interesting to add the modality explicitly, which could allow for ``productive type transformers'' that acted on recursive sum types.

The implementation of the context structure is based on the account of dependent type theory with a modality given in \cite{gratzer-2019-implementing}.

\subsubsection{Impredicativity}

Famously, System F is \emph{impredicative}, because the domain of quantification for a forall-type includes the type being created. In other words, we can apply an element of \(\forall \alpha. \alpha \to \alpha\) to the type \(\forall \alpha. \alpha \to \alpha\) itself.

Interestingly, we have the ability to make Hippogriff impredicative or predicative with a very small change. Namely, if we say that the size of a function type is given by the size of the codomain of that function type, then Hippogriff is impredicative; \lstinline{(a : Type) -> a -> a} is a small type. However, if we say that the size of a function type is given by the maximum size of the domain and codomain, then \lstinline{(a : Type) -> a -> a} is large, and thus cannot be encoded into \lstinline{Type}.

Currently, we have made the impredicative choice, because we build a model out of System F in any case so we might as well. However, conjecturally the predicative choice could lead to a system that could be fully monomorphized during compilation, essentially because we can statically track the implementations of functions with type arguments.

Going further, we could say ``all function types are large'', in which case functions would no longer be first-class values at runtime. This could allow predictable inlining into higher-order functions like \lstinline{map}, and is similar to how Rust works.

\subsection{Modification for the synthetic phase distinction} \label{sec:modification}

The key source of complexity in writing a bidirectional elaborator for a dependent type theory as opposed to a simple type theory is deciding type equality. Coquand discovered an algorithm for doing this that uses \emph{normalization by evaluation} \cite{coquand-1996-algorithm}.

Normalization by evaluation is a method of computing normal forms for a variety of extensions of the lambda calculus. The idea is the following. For a ``closed'' term, (e.g. a term without free variables) in a terminating programming language we may compute its ``normal form'' by simply evaluating it. Normalization by evaluation extends this by also allowing the evaluation of open terms.

To do this, we create fresh ``dummy values'' for each free variable. These dummy values are called ``neutrals''. Evaluation then proceeds as best as it can, but in certain situations it is not possible to perform certain operations on neutral values. For instance, if \(f\) is a neutral value of function type and we try to evaluate \(f\,t\), we will be ``stuck.'' We handle this by extending the definition of neutrals to allow a ``spine'' of operations which are stuck. In the context of a language with functions and records, the operations which can get stuck are function application and record projection. So in general, a neutral might look something like \(f\,\,a\,.x\,.y\,\,b\), where \(a\) and \(b\) represent function applications and \(.x\) and \(.y\) represent field projections.

In a classical normalization by evaluation setup, we check equality by evaluating syntax into values, and then we ``readback'' or ``quote'' these values back into syntax, and use a standard structural equality check on syntax. However, in practice it is more efficient to ``fuse'' the readback operation and the equality check, and simply write an equality function on values directly.

In Hippogriff, we modify the type of values to also include an ``opaque'' value \(\blacksquare\), which represents a value of a small type; a value we may ignore during type checking. Then problematic syntactic constructs like \lstinline{rec} and \lstinline{abandon} can evaluate to \(\blacksquare\). When checking equality of values \(v_1\), \(v_2\), if either is \(\blacksquare\) we can immediately return true, because we know that we are checking values of a small type.

We do have a readback operation in Hippogriff, which puts \(\blacksquare\) back into the syntax, but its only purpose is to print out values in diagnostics. Elaboration only needs values to do the type equality check; the syntax produced by elaboration never contains \(\blacksquare\) because it never goes through an evaluate-then-readback cycle.

We then use a different interpreter to actually run a Hippogriff program. This interpreter completely ignores \emph{type} values, but actually evaluates fixpoints. Erasure of types could be thought of as a different phase distinction; Theocharis and Brady have work in this direction \cite{theocharis-2026-type}.

\subsection{Similar languages}

\subsubsection{Lean}

Similar to Hippogriff, Lean supports general-recursive functions in the context of a dependently typed programming language. Lean achieves this via marking these functions as \lstinline{unsafe}, which then disallows them from appearing in types \cite{christiansen-2023-functional}.

This approach is not suitable in the context of a module system, however, because a module functor produces syntactic dependencies of the types in the produced module on the terms in the argument module. Thus, no unsafe functions could be used in implementing a module that one ever wishes to apply a module functor to.

\subsubsection{Staged metaprogramming with two-level type theory} 

Kovacs has proposed using two-level type theory for staged metaprogramming \cite{kovacs-2022-staged,kovacs-2024-closurefree}. Kovacs's approach is similar in a way to Hippogriff, in that the ``universe of small types'' plays a similar role in both languages. However, Kovacs's approach is opposite to Hippogriff in how it handles equality of potentially non-terminating object-level code. Specifically, Kovacs simply does not have beta-normalization for object-level code. So Kovacs does an underapproximation of object-level code equality while we do an overapproximation. These are both valid design choices; we chose overapproximation because it is the natural choice from the System F semantics.

\section{Mathematical preliminaries} \label{sec:preliminaries}

This section is key background for \S\ref{sec:psogats}. It is typical for a paper to start with a section like this, but we have deferred it so that the non-categorically inclined reader is not deterred at the start.

The content of this section is a review of the mathematical semantics for type theory. We start in \S\ref{sec:cwf} with categories with families, a classic notion of model for dependent type theory. In \S\ref{sec:indexed-cwf} we discuss indexed categories with families, which handle type theories with multiple contexts (such as System F, or typed predicate logic). Finally in \S\ref{sec:sogat-intro} we introduce second-order generalized algebraic theories, which gives a framework in which one can succinctly and correctly write down type formers in the context of structures similar to categories with families.

\subsection{Categories with families} \label{sec:cwf}

The central object of study in mathematical semantics of type theory is the category of contexts and substitutions associated with a type theory. Many type formers of a type theory are described by universal properties in this category. But this category is ill-equipped to discuss definitional equality for types, as types are treated as either objects in this category or objects in various slice categories thereof, and equality of objects in a category is an ill-behaved notion.

Therefore, it is typical to model the collection of types in each context as a \emph{presheaf} over the category of contexts and substitutions. There are a variety of ways of doing this, but we start by discussing one of the oldest, originally due to Dybjer \cite{dybjer-1996-internal}, though we give a presentation with notation following \cite{kovacs-2022-typetheoretic}.

\begin{definition}
  A \defcase{category with families}, often abbreviated to CwF, consists of:
  \begin{itemize}
    \item A category \(\mc{C}\). The objects of \(\mc{C}\) model contexts and the morphisms substitutions.
    \item A presheaf \(\ms{Ty} \colon \mc{C}\op \to \ms{Set}\). For a context \(\Gamma \colon \mc{C}\), the set \(\ms{Ty}\,\Gamma\) models the types that can be judged in \(\Gamma\). For a substitution \(\gamma \colon \Delta \To \Gamma\) and a type \(A \colon \ms{Ty}\,\Gamma\), we write \(A[\gamma] \colon \ms{Ty}\,\Delta\) to denote the action of substitution on \(A\).
    \item A presheaf \(\ms{Tm} \colon (\groth \ms{Ty})\op \to \ms{Set}\), where \(\groth \ms{Ty}\) is the Grothendieck construction of \(\ms{Ty}\). For a context \(\Gamma \colon \mc{C}\) and a type \(A \colon \ms{Ty}\,\Gamma\), the set \(\ms{Tm}\,\Gamma\,A\) models the terms of type \(A\) judged in context \(\Gamma\). For a substitution \(\gamma \colon \Delta \To \Gamma\), type \(A \colon \ms{Ty}\,\Gamma\) and term \(a \colon \ms{Tm}\,\Gamma\,A\), we write \(a[\gamma] \colon \ms{Tm}\,\Delta\,A[\gamma]\) to denote the action of substitution on \(a\).
    \item A terminal object \(\cdot\) in \(\mc{C}\).
    \item A context extension operation, which takes a context \(\Gamma \colon \mc{C}\) and \(A \colon \ms{Ty}\,\Gamma\) and produces a new context \(\Gamma \tri A\). This new context is characterized by requiring that for a context \(\Delta\), \[\mc{C}(\Delta, \Gamma \tri A) \cong (\gamma \colon \Delta \To \Gamma) \times (\ms{Tm}\,\Delta\,A[\gamma])\]
    From this isomorphism applied to \(1_{\Gamma \tri A}\), we can derive the weakening substitution \(\ms{p} \colon \Gamma \tri A \To \Gamma\) and the variable rule \(\ms{q} \colon \ms{Tm}\,(\Gamma \tri A)\,A[\ms{p}]\).
  \end{itemize}
\end{definition}

A CwF models the basic judgment structure of dependent type theory. In order to model various type formers, we add additional structure.

\begin{definition} \label{def:function-types}
  A CwF has dependent function types if we have operations \((\Pi,\Pi/\ms{lam},\Pi/\ms{cons},\Pi/\beta,\Pi/\eta)\) with types
  \begin{align*}
  &\Pi \colon \{\Gamma \colon \mc{C}\} \to (A \colon \ms{Ty}\,\Gamma) \to \ms{Ty}\,(\Gamma \tri A) \to \ms{Ty}\,\Gamma \\
  &(\Pi/\ms{lam},\Pi/\ms{cons},\Pi/\beta,\Pi/\eta) \colon \{\Gamma \colon \mc{C}\} \to \{A \colon \ms{Ty}\,\Gamma\} \to \{B \colon \ms{Ty}\,(\Gamma \tri A)\} \to \\
  &\quad \ms{Tm}\,(\Gamma \tri A)\,B \cong \ms{Tm}\,\Gamma\,(\Pi\,A\,B)
  \end{align*}
  that are appropriately natural with respect to substitution. The reader's discomfort with the vagueness of ``appropriately natural'' and the nuisance of making it appropriately non-vague serves as a motivation for \S\ref{sec:sogat-intro}.
\end{definition}

Interestingly enough, we can also model \emph{simple} type theory with an appropriate restriction to the definition of CwF, due to Castellan, Clairambault and Dybjer \cite{castellan-2021-categories}.

\begin{definition}
  A \defcase{simple category with families}, abbreviated to SCwF, is a category with families where \(\ms{Ty}\) is constant, meaning that \(\ms{Ty}\,\Gamma = \ms{Ty}\,\cdot\) for all \(\Gamma \in \mc{C}\). An \defcase{unityped category with families}, abbreviated to UCwF, is a CwF where \(\ms{Ty}(\Gamma) = 1\) for all \(\Gamma\).
\end{definition}

\begin{proposition} \label{prop:scwf-pi}
  A SCwF with dependent function types is a model of the simply typed lambda calculus.
\end{proposition}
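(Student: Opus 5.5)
The plan is to show that the data of an SCwF with dependent function types is the same as the data that the standard categorical semantics of the simply typed lambda calculus requires: a category with finite products (built from \(\cdot\) and context extension) that is cartesian closed relative to the chosen types. Once this is done, the usual interpretation of STLC applies. Contexts are interpreted by iterated extension \(\cdot \tri A_1 \tri \cdots \tri A_n\), variables by composites of \(\ms{q}\) and \(\ms{p}\), and abstraction and application by the isomorphism \(\Pi/\ms{lam}\), \(\Pi/\ms{cons}\). Soundness of \(\beta\) and \(\eta\) follows from the fact that this is an isomorphism.

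The first step is to exploit constancy of \(\ms{Ty}\). Since \(\ms{Ty}\,\Gamma = \ms{Ty}\,\cdot\), every type is closed. I read the constancy condition as also making the restriction maps identities, so \(A[\gamma] = A\) for all substitutions \(\gamma\). In particular \(\ms{Ty}\,(\Gamma \tri A) = \ms{Ty}\,\Gamma\). So for \(A, B \in \ms{Ty}\,\cdot\), the dependent function type \(\Pi\,A\,B\) (viewing \(B\) in \(\ms{Ty}\,(\Gamma\tri A)\)) lands in \(\ms{Ty}\,\Gamma = \ms{Ty}\,\cdot\). We define \(A \Rightarrow B := \Pi\,A\,B\). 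Naturality of \(\Pi\) together with trivial reindexing shows that this type is independent of \(\Gamma\).

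The second step is to unwind the structure. The context extension isomorphism becomes \(\mc{C}(\Delta,\Gamma\tri A) \cong \mc{C}(\Delta,\Gamma)\times \ms{Tm}\,\Delta\,A\), so \(\Gamma \tri A\) is a product of \(\Gamma\) with the "object" \(A\), viewed as a representable-style functor. The \(\Pi\) isomorphism reads \(\ms{Tm}\,(\Gamma\tri A)\,B \cong \ms{Tm}\,\Gamma\,(A\Rightarrow B)\), natural in \(\Gamma\). This is exactly the exponential adjunction in the form used for STLC, with typing judgements \(\Gamma, x{:}A \vdash t : B\) correspondent to \(\Gamma \vdash \lambda x.t : A \to B\).

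Finally, we define the interpretation of STLC by induction on typing derivations:
\begin{itemize}
  \item base types are interpreted by chosen elements of \(\ms{Ty}\,\cdot\), and arrows by \(\Rightarrow\);
  \item variables are interpreted by \(\ms{q}[\ms{p}^k]\);
  \item \(\lambda\) is interpreted by \(\Pi/\ms{lam}\);
  \item application \(t\,u\) is interpreted by \(\Pi/\ms{cons}(t)[\langle 1, u\rangle]\).
\end{itemize}

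The main obstacle is the substitution lemma: interpreting syntactic substitution as the \(\ms{Tm}\) action must commute with these clauses. This is precisely where "appropriately natural" in Definition~\ref{def:function-types} is needed, since it guarantees that \(\Pi/\ms{lam}\) commutes with weakened substitutions \(\gamma^{+} = \langle \gamma\circ\ms{p}, \ms{q}\rangle\). I would prove it by induction on terms, using the universal property of extension to compute \(\gamma^{+}\). Once the substitution lemma is in hand, \(\beta\) and \(\eta\) follow directly from \(\Pi/\beta\) and \(\Pi/\eta\).
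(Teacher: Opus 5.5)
Your proposal is correct and takes the paper's route: the paper's entire proof is the observation that, since \(\ms{Ty}\) is constant, \(B \colon \ms{Ty}\,(\Gamma \tri A)\) is no different from a closed type, so dependent function types are just simple function types, which is exactly your definition \(A \Rightarrow B := \Pi\,A\,B\). The rest of your argument makes explicit what the paper leaves implicit, namely that an SCwF with simple function types models the simply typed lambda calculus; this covers your naturality-plus-terminal-object check that \(\Pi\) does not depend on \(\Gamma\), and the standard interpretation with its substitution lemma and \(\beta\)/\(\eta\) soundness.
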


\begin{proof}
  Note that a SCwF with dependent function types in fact only has simple function types, because having \(B \colon \ms{Ty}\,(\Gamma \tri A)\) is not different from having \(B \colon \ms{Ty}\,\cdot\). 
\end{proof}

\subsection{System F as SCwF-indexed SCwF} \label{sec:indexed-cwf}

An intuitive way of describing System F is that it is two copies of the simply typed lambda calculus stuck on top of each other. We can make this precise via the notion of an indexed category with families. However, before we can develop this, we must discuss the category of CwFs.

\begin{definition}
  A \defcase{CwF morphism} from \((\mc{C}, \ms{Ty}_{\mc{C}}, \ms{Tm}_{\mc{C}})\) to \((\mc{D}, \ms{Ty}_{\mc{D}}, \ms{Tm}_{\mc{D}})\) consists of a functor \(F \colon \mc{C} \to \mc{D}\) along with natural transformations \(\ms{Ty}_F \colon \ms{Ty}_{\mc{C}} \To F^\ast(\ms{Ty}_{\mc{D}})\), \(\ms{Tm}_F \colon \ms{Tm}_{\mc{C}} \To (F, \ms{Ty}_F)^\ast(\ms{Tm}_{\mc{D}})\). Let \(\ms{CwF}\) be the category where the objects are CwFs and the morphisms are CwF-morphisms.
\end{definition}

The following definition is due to Dybjer \cite{dybjer-2019-categories}.

\begin{definition} \label{def:cwf-indexed-cwf}
  A \defcase{CwF-indexed CwF} consists of a CwF \(\mc{C}_0\) along with a functor \(\bb{C}_1 \colon \mc{C}_0\op \to \ms{CwF}\). We call \(\mc{C}_0\) the base CwF, and for \(\Gamma \colon \mc{C}_0\) we call \(\bb{C}_1(\Gamma)\) the fiber CwF over \(\Gamma\). If the base CwF is simple, we call it a SCwF-indexed CwF, and if the fiber CwFs are simple, we call it a CwF-indexed SCwF; if both are simple then obviously it is a SCwF-indexed SCwF. Similar variations can be taken with UCwFs.
\end{definition}

There are many sets involved in a CwF-indexed CwF; to clarify notation and as a conceptual aid we write out their dependency as follows.
\begin{align*}
  &\mc{C}_0 \colon \ms{Set} \\
  &({-}\To_0{-}) \colon \mc{C}_0 \to \mc{C}_0 \to \ms{Set} \\
  &\ms{Ty}_0 \colon \mc{C}_0 \to \ms{Set} \\
  &\ms{Tm}_0 \colon (\Gamma_0 \colon \mc{C}_0) \to \ms{Ty}_0\,\Gamma_0 \to \ms{Set} \\
  &\bb{C}_1 \colon \mc{C}_0 \to \ms{Set} \\
  &({-}\To_1{-}) \colon (\Gamma_0 \colon \mc{C}_0) \to \bb{C}_1\,\Gamma_0 \to \bb{C}_1\,\Gamma_0 \to \ms{Set} \\
  &\ms{Ty}_1 \colon (\Gamma_0 \colon \mc{C}_0) \to \bb{C}_1\,\Gamma_0 \to \ms{Set} \\
  &\ms{Tm}_1 \colon (\Gamma_0 \colon \mc{C}_0) \to (\Gamma_1 \colon \bb{C}_1\,\Gamma_0) \to \ms{Ty}_1\,\Gamma_0\,\Gamma_1 \to \ms{Set}
\end{align*}

CwF-indexed CwFs model the judgment structure of split-context type theories.

\begin{example}
  The judgment structure of classical predicate logic can be modeled by a UCwF-indexed SCwF. The base UCwF models variable contexts and substitutions built out of function symbols; the ``U'' represents the fact that classical predicate logic does not have types. The types in the fiber SCwFs are predicates, and the terms represent judgments \(P_1,\ldots,P_n \vdash Q\). Note that the fiber categories end up being preorders.

  It is then natural to relax this to a SCwF-indexed SCwF, which models typed predicate logic, or a CwF-indexed SCwF, which models dependently typed predicate logic \cite{palmgren-2019-categories}.

  This can be thought of as a refinement of Lawvere's notion of hyperdoctrine which treats more carefully the distinction between types and contexts; see Jacobs \cite{jacobs-1999-categorical} for an in-depth account of hyperdoctrines.
\end{example}

\begin{example}
  The judgment structure of System F can also be modeled by a UCwF-indexed SCwF. The presheaves \(\ms{Ty}_0,\ms{Tm}_0,\ms{Ty}_1,\ms{Tm}_1\) model kinds, constructions, types, and terms respectively, as found in Harper, Mitchell, Moggi \cite{harper-1990-higherorder}. Similarly, \(\mc{C}_0\) models kind contexts, and \(\bb{C}_1(\Gamma_0)\) models type contexts in a kind context. In System F, there is only one kind, \(\star\), which is why this is a UCwF.

  If we want to instead model System F\(_\omega\), we change to a SCwF-indexed SCwF, because in addition to the kind \(\star\) we also have \(\star \to \star\), etc.
\end{example}

Now that we have the judgment structure, we can start postulating type formers. First of all, any type former that can be added to a CwF can be imported at either the base or fiber level.

\begin{example}
  In System F\(_\omega\), both base and fiber SCwFs have function types.
\end{example}

However, the more interesting case is when type formers involve both the base and the fiber in a non-trivial way.

\begin{definition} \label{def:universal-quantification}
  A CwF-indexed CwF has universal quantification if we have a tuple of operations \((\forall, \forall/\ms{lam},\forall/\ms{app},\forall/\beta,\forall/\eta)\) with types
  \begin{align*}
    &\forall \colon \{\Gamma_0 \colon \mc{C}_0\} \to (A_0 \colon \ms{Ty}_0\,\Gamma_0) \to \{\Gamma_1 \colon \bb{C}_1\,\Gamma_0\} \to \ms{Ty}_1\,(\Gamma_0 \tri A_0)\,(\Gamma_1[\ms{p}_{A_0}]) \to \ms{Ty}_1\,\Gamma_0\,\Gamma_1 \\
    &(\forall/\ms{lam},\forall/\ms{App},\forall/\beta,\forall/\eta) \colon \{\Gamma_0 \colon \mc{C}_0\} \to (A_0 \colon \ms{Ty}_0\,\Gamma_0) \to \{\Gamma_1 \colon \bb{C}_1\,\Gamma_0\} \to \ms{Ty}_1\,(\Gamma_0 \tri A_0)\,(\Gamma_1[\ms{p}_{A_0}]) \to \\
    & \quad \ms{Tm}_1\,(\Gamma_0 \tri A)\,(\Gamma_1[\ms{p}_{A_0}])\,A_1 \cong \ms{Tm}_1\,\Gamma_0\,\Gamma_1\,(\forall\,A_0\,A_1)
  \end{align*}
  that are ``appropriately natural'' with respect to substitution, where now the difficulty of figuring out what ``appropriately natural'' means is significantly higher than in the case of a single CwF as in Definition~\ref{def:function-types}. This will be handled automatically in \S\ref{sec:system-f}.
\end{definition}

While the structure of a CwF-indexed CwF matches quite directly the syntactic structure of a split-context type theory, this means that it also inherits the inherent complexity of a split-context type theory. The key innovation in this paper is to build a framework which is equivalent in power to a CwF-indexed CwF, but is significantly less cluttered.

\subsection{Second-order generalized algebraic theories} \label{sec:sogat-intro}

As we have mentioned in Definitions~\ref{def:function-types} and \ref{def:universal-quantification}, defining type formers in the CwF framework is somewhat fiddly because one must always remember to add appropriate naturality conditions. Second-order generalized algebraic theories (SOGATs) solve this problem by using the metatheory to keep track of the context rather than explicitly passing context objects around \cite{uemura-2021-abstract, kaposi-2024-secondorder}. One way of thinking about this is that SOGATs formalize the use of higher-order abstract syntax \cite{pfenning-1988-higherorder} for use in a logical framework \cite{pfenning-1999-system}. Another way to think about this is that SOGATs are a common generalization of generalized algebraic theories \cite{cartmell-1986-generalised} and second-order algebraic theories \cite{fiore-1999-abstract}. Generalized algebraic theories add type dependency to algebraic theories and second-order algebraic theories add variable binding; SOGATs add both.

We give a brief introduction to SOGATs, which requires first an introduction to GATs (generalized algebraic theories), but we refer the reader to the theses of Kovacs \cite{kovacs-2022-typetheoretic} and Bocquet \cite{bocquet-2025-relative} for full accounts in a modern style.

\begin{definition}
  A \defcase{generalized algebraic theory} is a category with families finitely presented by the postulation of types, terms, and equalities.
\end{definition}

\begin{definition}
  A \defcase{model} of a GAT \(T\) is a CwF-morphism from \(T\) into \(\ms{Set}\), where \(\ms{Set}\) is equipped with the CwF-structure with
  \begin{align*}
    & \ms{Ty}_{\ms{Set}}\,\Gamma = \Gamma \to \ms{Set} \\
    & \ms{Tm}_{\ms{Set}}\,\Gamma\,A = (\gamma \colon \Gamma) \to A\,\gamma
  \end{align*}
\end{definition}

\begin{example}
  There is a GAT, the models of which are graphs, presented by
  \begin{align*}
    & \ms{Vertex} \colon \ms{Ty}\,\cdot \\
    & \ms{Edge} \colon \ms{Ty}\,(\cdot \tri \ms{Vertex} \tri \ms{Vertex}[\ms{p}_{\ms{Vertex}}])
  \end{align*}
  Recall that \(\ms{Vertex}[\ms{p}_{\ms{Vertex}}]\) is the application of the weakening substitution \(\ms{p}_{\ms{Vertex}} \colon \cdot
  \tri \ms{Vertex} \To \cdot\) to \(\ms{Vertex}\).
\end{example}

It quickly becomes tedious to work in this style, with explicit weakenings and no named variables. We can make our life easier by ``working internally''. Formally speaking we work in a type theory known as the theory of signatures for GATs; a signature in this type theory presents a GAT \cite{kovacs-2022-typetheoretic}.

We also take this opportunity to rename \(\ms{Ty}\) to \(\ms{Jdg}\) in order to emphasize that these ``types'' model meta-level judgments, and also to free up \(\ms{Ty}\) as a name that we can use.

\begin{example}
  The theory of graphs in theory of signatures style may be instead presented by
  \begin{align*}
    & \ms{Vertex} \colon \ms{Jdg} \\
    & \ms{Edge} \colon \ms{Vertex} \to \ms{Vertex} \to \ms{Jdg}
  \end{align*}
\end{example}

This is precisely the same GAT, just written in a different style. The ``functions'' are just handling the context structure.

\begin{example}
  Now that we have a more convenient syntax, we can extend the theory of graphs to the theory of categories.
  \begin{align*}
    & \ms{Ob} \colon \ms{Jdg} \\
    & \ms{Hom} \colon \ms{Ob} \to \ms{Ob} \to \ms{Jdg} \\
    & \ms{id} \colon (A \colon \ms{Ob}) \to \ms{Hom}\,A\,A \\
    & \ms{compose} \colon \{A\,B\,C \colon \ms{Ob}\} \to \ms{Hom}\,A\,B \to \ms{Hom}\,B\,C \to \ms{Hom}\,A\,C \\
    & \ms{unitl} \colon \{A\,B \colon \ms{Ob}\} \to (f \colon \ms{Hom}\,A\,B) \to \ms{compose}\,(\ms{id}\,A)\,f = f \\
    & \ms{unitr} \colon \{A\,B \colon \ms{Ob}\} \to (f \colon \ms{Hom}\,A\,B) \to \ms{compose}\,f\,(\ms{id}\,B) = f \\
    & \ms{assoc} \colon \{A\,B\,C\,D \colon \ms{Ob}\} \to (f \colon \ms{Hom}\,A\,B) \to (g \colon \ms{Hom}\,B\,C) \to (h \colon \ms{Hom}\,C\,D) \to \\
    &\quad \ms{compose}\,(\ms{compose}\,f\,g)\,h = \ms{compose}\,f\,(\ms{compose}\,g\,h)
  \end{align*}
\end{example}

It is even possible to continue on in this style and write down a generalized algebraic theory for CwFs, and implicitly we have been using this fact in our type-theoretic notation for CwFs. However, this requires explicitly managing contexts, which is precisely what we would like to get away from.

One idea for resolving this would be to move from ``finitely presented CwFs'' to ``finitely presented CwFs with dependent function types.'' This allows classic logical framework-style definitions for type theories.

\begin{example}
  Dependent type theory with function types may be presented as a ``free CwF with function types'' in the following way.
  \begin{align*}
    &\ms{Ty} \colon \ms{Jdg} \\
    &\ms{Tm} \colon \ms{Ty} \to \ms{Jdg} \\
    &\Pi \colon (A \colon \ms{Ty}) \to (\ms{Tm}\,A \to \ms{Ty}) \to \ms{Ty} \\
    &(\Pi/\ms{lam},\Pi/\ms{app},\Pi/\beta,\Pi/\eta) \colon \{A \colon \ms{Ty}\} \to \{B \colon \ms{Tm}\,A \to \ms{Ty}\} \to \\
    &\quad ((a \colon \ms{Tm}\,A) \to \ms{Tm}\,(B\,a)) \cong \ms{Tm}\,(\Pi\,A\,B)
  \end{align*}
\end{example}

This is workable, and is similar to the use of free locally cartesian closed categories as a logical framework in \cite{gratzer-2021-syntactic}. However, when defining a type theory in this style, one does not have control over which judgments may appear in the context of the object type theory.

The solution is to have two universes, \(\ms{Jdg}\) and \(\ms{Jdg}^+\). We then only allow dependent functions with domains taken from \(\ms{Jdg}^+\), modeling the fact that only elements of \(\ms{Jdg}^+\) may appear in the context of the object type theory. This is where the name ``second-order'' comes from, as opposed to the ``higher-order'' nature of a CwF with all function types.

\begin{definition}
  A \(\Pi^+\)-CwF is a CwF \(\mc{C}\) equipped with a sub-presheaf \(\ms{Ty}^+ \subset \ms{Ty}\), along with dependent function types when the domain is in \(\ms{Ty}^+\) \cite{bocquet-2025-relative}.
\end{definition}

\begin{definition} \label{def:sogat}
  A \defcase{second-order generalized algebraic theory} is a finitely presented \(\Pi^+\)-CwF, specifically by a signature in an extension of the theory of signatures for GATs that includes another universe \(\ms{Jdg}^+\).
\end{definition}

\begin{example} \label{ex:function-type-sogat}
  Dependent type theory with function types may be presented as a SOGAT in the following way.
  \begin{align*}
    &\ms{Ty} \colon \ms{Jdg} \\
    &\ms{Tm} \colon \ms{Ty} \to \ms{Jdg}^+ \\
    &\Pi \colon (A \colon \ms{Ty}) \to (\ms{Tm}\,A \to \ms{Ty}) \to \ms{Ty} \\
    &(\ms{lam},\ms{app},\Pi/\beta,\Pi/\eta) \colon \{A \colon \ms{Ty}\} \to \{B \colon \ms{Tm}\,A \to \ms{Ty}\} \to \\
    &\quad ((a \colon \ms{Tm}\,A) \to \ms{Tm}\,(B\,a)) \cong \ms{Tm}\,(\Pi\,A\,B)
  \end{align*}
\end{example}

For the reader disinclined to think too hard about categories and presheaves, SOGATs can be simply understood as an alternative notation to the typical ``turnstile'' presentation of a type theory.

If a SOGAT is a finitely presented \(\Pi^+\)-CwFs, the natural story for models of SOGATs is to look for homomorphisms into ``semantic'' \(\Pi^+\)-CwFs. This would bring the model theory for SOGATs into the classic realm of categorical algebra, the original instance of which is of course Lawvere's work on algebraic theories \cite{lawvere-1968-functorial}. To tell this story, however, we shift to a weaker notion: categories with representables \cite{uemura-2021-abstract}.

\begin{definition}
  A \defcase{category with representables} (CwR) is a category \(\mc{P}\) with finite limits equipped with a class \(R_{\mc{P}}\) of morphisms, satisfying the following.
  \begin{itemize}
    \item All morphisms in \(R_{\mc{P}}\) are exponentiable.
    \item \(R_{\mc{P}}\) is stable under pullback.
  \end{itemize}
\end{definition}

Categories with representables are to \(\Pi^+\)-CwFs as finite limit categories are to CwFs; they are further removed from the syntax, but easier to work with mathematically. CwRs were also developed first, as finite limit categories were. Intuitively, the morphisms in \(R_{\mc{P}}\) correspond to the display maps for type families \(A \to \ms{Jdg}^+\).

\begin{example} \label{ex:lcc-cwr}
  Any locally cartesian closed category \(\mc{P}\) can be given CwR structure by letting \(R_{\mc{P}}\) be the class of all morphisms.
\end{example}

If \(\mc{C}\) is a category, then \(\ms{Psh}(\mc{C})\) may be equipped with CwR structure as in Example~\ref{ex:lcc-cwr}. However, we are typically more interested in the structure of Definition~\ref{ex:psh-cwr}, which requires a preliminary definition.

\begin{definition}
  Let \(u \colon \dot{U} \to U\) be a natural transformation between presheaves on \(\mc{C}\). Then \(u\) is \defcase{representable} if for all \(\Gamma \in \mc{C}\), \(A \colon \yo \Gamma \to U\), the pullback of \(A\) along \(u\) is a representable presheaf. In other words, there exists an object \(\Gamma \tri A \in \mc{C}\) such that
\[\begin{tikzcd}
	{\yo (\Gamma \tri A)} & {\dot{U}} \\
	{\yo \Gamma} & U
	\arrow[from=1-1, to=1-2]
	\arrow[from=1-1, to=2-1]
	\arrow["\lrcorner"{anchor=center, pos=0.125}, draw=none, from=1-1, to=2-2]
	\arrow["u", from=1-2, to=2-2]
	\arrow["A"', from=2-1, to=2-2]
\end{tikzcd}\]
\end{definition}

Representable natural transformations were first used in type theory by Awodey in \cite{awodey-2018-natural}, to make the following definition.

\begin{definition}
  A \defcase{natural model} is a category equipped with a representable morphism of presheaves.
\end{definition}

\begin{definition} \label{ex:psh-cwr}
  There is a CwR structure on \(\ms{Psh}(\mc{C})\) where the representable morphisms are precisely the representable natural transformations; when we use \(\ms{Psh}(\mc{C})\) as a CwR this is always the structure that we mean.
\end{definition}

Uemura defines a \emph{type theory} to be a finitely-presented CwR, which is analogous to how we have defined a SOGAT to be a finitely-presented \(\Pi^+\)-CwF. It is perhaps a bit of an overreach to identify the very general word ``type theory'' with such a specific notion; a better choice might be ``second-order sketch'' in analogy with the concept of finite limit sketch.

In any case, given a CwR \(\mc{T}\), which we have derived via some scheme of presentation (sketch or theory of signatures), we can define a model to be a morphism into some concrete CwR.

\begin{definition}
  A \defcase{CwR morphism} from \(\mc{P}_0\) to \(\mc{P}_1\) is a finite limit functor which preserves both the class of representable maps and exponentiation by representable maps.
\end{definition}

\begin{definition} \label{def:cwr-model}
  Let \(\mc{T}\) be a CwR. Then a \defcase{model} is a choice of category \(\mc{C}\) and morphism from \(\mc{T}\) into \(\ms{Psh}(\mc{C})\), where \(\ms{Psh}(\mc{C})\) has been given the CwR structure of Definition~\ref{ex:psh-cwr}.
\end{definition}

It is important to remember, however, that unlike with algebraic theories where a model is a functor into some fixed category, usually \(\ms{Set}\), a model of a CwR consists of a \emph{choice of category} \(\mc{C}\) and then a functor into \(\ms{Psh}(\mc{C})\).

There is a sense of model which is more analogous, howoever. For any finitely presented CwR \(\mc{T}\) it is possible to write down a finite limit theory of ``a category equipped with a model of \(\mc{T}\)''; models of this finite limit theory in \(\ms{Set}\) then correspond to models of \(\mc{T}\) in the sense of Definition~\ref{def:cwr-model}.

Equivalently, in a more syntactic fashion, any SOGAT may be mechanically translated to a GAT which has the same models \cite{kaposi-2024-secondorder}. This is advantageous because it means that SOGATs have initial models, which may be considered as the ``syntax'' of the SOGAT; this is a key advantage of the second-order approach over the higher-order approach.

\section{SOGATs with phase distinctions} \label{sec:psogats}

It is possible to write down a SOGAT for System F, and this is in fact one of the examples given in \cite{kaposi-2024-secondorder}. However, the control over dependency given by the split-context type theory is lost in the SOGAT setting. This means that while it may be true that in the initial model kinds have no dependency on terms this is not the case in a general model.

The machinery in this section resolves this issue, by controlling how various judgments in a SOGAT depend on the different parts of a split context using a modality.

\subsection{The open and closed modalities associated with a proposition}

Open and closed modalities have a long history within mathematics under the title of Artin gluing \cite{wraith-1974-artin}. In recent years, they have also been successfully applied in a variety of different ways within programming languages under the heading of ``synthetic phase distinctions''. Logical Relations as Types used a synthetic phase distinction to understand logical relations in the context of module systems \cite{sterling-2021-logical}; Sterling then went on to use a synthetic phase distinction in order to give a semantic proof of normalization for Martin-Löf type theory, which could then be generalized to a proof of normalization for cubical type theory \cite{sterling-2021-thesis}. Soon after this, a synthetic phase distinction was used to develop Calf and its successor Decalf, capturing the relationship between logical behavior and runtime cost \cite{niu-2022-costaware,grodin-2024-decalf,grodin-2026-abstraction}. The semantic proof of normalization for Martin-Löf type theory using the synthetic phase distinction was mechanized successfully in \cite{li-2026-mechanizing}. Recently, Theocharis has used a synthetic phase distinction to formalize \emph{erasure}, which allows a compiler for a dependently typed language to safely ignore some of the arguments to a function which are there ``only for typing reasons'' \cite{theocharis-2026-type}.

One of the key advantages of synthetic phase distinctions is that it is possible, and indeed not even difficult, to have many different phase distinctions each associated to different propositions; this allows a compositional treatment of the various topics handled by phase distinction. We do not do this in the current paper for simplicity's sake, but we point out at various parts where it might be advantageous to have another phase distinction.

We begin our account of the phase distinction via a type-theoretic (e.g., non-categorical) treatment. The categorical treatment may be recovered by assuming we are working internally to some topos.

\begin{definition}
Let $\phi$ be a proposition. Then for a type \(A\), define \(\Op A = \phi \to A\). We call \(\Op\) the \defcase{open modality} associated with \(\phi\). In the case when there are multiple open modalities around, we disambiguate by subscript, as in \(\Op\Sub{\phi}\).
\end{definition}

\begin{proposition}
  \(\Op\) is an idempotent monad.
\end{proposition}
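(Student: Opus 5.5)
The plan is to exhibit the monad structure explicitly and then show that the multiplication is an isomorphism, or equivalently that \(\eta\Sub{\Op A}\) is an isomorphism with the two whiskerings \(\Op\eta\) and \(\eta\Op\) agreeing.

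\textbf{Functor.} For \(f \colon A \to B\), set \(\Op f \colon \Op A \to \Op B\) by \(\Op f\,u = \lambda p.\, f\,(u\,p)\). Preservation of identities and composition is immediate by function extensionality.

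\textbf{Unit and multiplication.} Define the unit \(\eta_A \colon A \to \Op A\) by \(\eta_A\,a = \lambda p.\, a\), the constant function. Define the multiplication \(\mu_A \colon \Op\Op A \to \Op A\) by \(\mu_A\,u = \lambda p.\, u\,p\,p\). This is the diagonal, i.e.\ the reader monad \(\phi \to {-}\) specialized to the type \(\phi\). Naturality of both maps and the three monad laws are routine \(\beta\eta\)-calculations. I will not carry them out, beyond noting that, being the reader monad, they hold for any type in place of \(\phi\).

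\textbf{Idempotence.} This is the only place where it matters that \(\phi\) is a proposition. I will show that \(\eta_{\Op A} \colon \Op A \to \Op\Op A\) is an isomorphism with inverse \(\mu_A\).

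One direction holds for any \(\phi\): \(\mu_A(\eta_{\Op A}\,u) = \lambda p.\, u\,p = u\) by \(\eta\).

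For the other direction, take \(v \colon \Op\Op A\) and compute \(\eta_{\Op A}(\mu_A\,v) = \lambda p\,q.\, v\,q\,q\). We must show this equals \(v = \lambda p\,q.\, v\,p\,q\). Since \(\phi\) is a proposition, any \(p, q \colon \phi\) satisfy \(p = q\), so \(v\,q\,q = v\,p\,q\). This step is the main (if small) obstacle: it genuinely needs proof-irrelevance of \(\phi\) and would fail for the general reader monad.

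It follows that \(\mu\) is a natural isomorphism, which is one standard characterization of an idempotent monad. Equivalently, since \(\eta_{\Op A}\) is invertible and \(\mu_A \circ \Op\eta_A = \mathrm{id}\), we get \(\Op\eta_A = \eta_{\Op A}\). This can also be checked directly, because both send \(u\) to \(\lambda p\,q.\, u\) applied at \(q\) or at \(p\) respectively, and these agree by the same proof-irrelevance argument.
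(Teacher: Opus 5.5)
Your proof is correct and is essentially the paper's argument. The paper also identifies \(\Op\) as the reader monad for \(\phi\) and gets idempotence from propositionality, written as \(\phi \cong \phi \times \phi\) so that \(\phi \to (\phi \to A) \cong (\phi \times \phi) \to A \cong \phi \to A\); this is exactly your proof-irrelevance step, and your version helpfully makes explicit that the isomorphism is \(\mu_A\) with inverse \(\eta_{\Op A}\), which the paper leaves implicit and which matters because a bare isomorphism \(\Op\Op A \cong \Op A\) would not by itself give idempotence. One cosmetic slip: in your last sentence the ``respectively'' is reversed, since \(\Op\eta_A\,u = \lambda p\,q.\,u\,p\) while \(\eta_{\Op A}\,u = \lambda p\,q.\,u\,q\), which does not affect the argument.
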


\begin{proof}
  \(\Op\) is the reader monad for \(\phi\), and it is idempotent because by propositionality of \(\phi\), \(\phi \cong \phi \times \phi\), so \(\phi \to (\phi \to A) \cong (\phi \times \phi) \to A \cong \phi \to A\).
\end{proof}

\begin{example} \label{ex:prop-classifier}
  Consider the topos \(\ms{Fam}\) of pairs \(X_0 \colon \ms{Set}\), \(X_1 \colon X_0 \to \ms{Set}\), equivalent to the topos of presheaves on the arrow category. As a topos, \(\ms{Fam}\) interprets dependent type theory. Internally to \(\ms{Set}^\to\) we have a proposition \(\phi\) which may be defined externally by \(\phi_0 = 1\), \(\phi_1({-}) = \emptyset\). Then for a type \(X\), \(\Op(X)\) is externally given by \(\Op(X)_0 = X_0\), \(\Op(X)_1({-}) = 1\).

  Intuitively, an element \(X\) of \(\ms{Fam}\) has some information in the base \(X_0\) and some information in the fibers \(X_1\), and \(\Op(X)\) forgets the information in the fibers while keeping the information in the base.
\end{example}

\begin{definition}
  Let \(\phi\) be a proposition. Then for a type \(A\), define \(\Cl A\) by the quotient-inductive type
  \begin{align*}
    &\mb{data}\quad \Cl A \quad \mb{where} \\
    &\quad \eta^{\Cl} \colon A \to \Cl A \\
    &\quad \ast \colon \S \to \Cl A \\
    &\quad \ms{glue} \colon (a \colon A) \to (s \colon \S) \to \eta^{\Cl}\,a = \ast\,s
  \end{align*}
  Mathematically, this is the pushout of the projections \(A \times \phi \to A\) and \(A \times \phi \to \phi\).
\end{definition}

\begin{example} \label{ex:prop-classifier-closed}
  Using the same setup as in Example~\ref{ex:prop-classifier}, \(\Cl A\) is externally given by
  \begin{align*}
    &(\Cl A)_0 = 1 \\
    &(\Cl A)_1(-) = (a_0 \colon A_0) \times A_1\,a_0
  \end{align*}
  Intuitively, if \(\Op\) ``forgets the fibers'', \(\Cl\) ``pushes everything into the fibers''.
\end{example}

\begin{definition}
  A type \(X\) is said to be \defcase{open-modal} with respect to \(\phi\) if the unit \(\eta^{\Op}_X \colon X \to \Op X\) is a bijection. A type is \defcase{closed-modal} if \(\eta^{\Cl}_X \colon X \to \Cl X\) is a bijection.
\end{definition}

One can see that in the setting of Example~\ref{ex:prop-classifier}, a type \(X\) is open-modal if and only if \(X_1(x)\) is the unit for every \(x \colon X_0\), and closed-modal if and only if \(X_0\) is the unit.

\begin{definition}
  If \(\mc{E}\) is a topos and \(\phi\) a proposition in \(\mc{E}\), then we refer to the class of open-modal objects in \(\mc{E}\) by \(\mc{E}_\phi\) and the class of closed-modal objects in \(\mc{E}\) by \(\mc{E}_{\setminus \phi}\).
\end{definition}

\begin{proposition}
  \(\mc{E}_{\phi}\) and \(\mc{E}_{\setminus \phi}\) are both subtopoi of \(\mc{E}\).
\end{proposition}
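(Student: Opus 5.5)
The plan is to use the standard correspondence between subtopoi of \(\mc{E}\) and Lawvere--Tierney topologies, or equivalently idempotent monads on \(\mc{E}\) that preserve finite limits (lex reflective subcategories). Showing that \(\mc{E}_\phi\) is a subtopos then reduces to three checks on \(\Op\): it is an idempotent monad, it is lex, and its modal objects are exactly those \(X\) for which \(\eta^{\Op}_X\) is an isomorphism. By the definition of open-modal, the third check holds by definition. The first is the proposition proved just above, that \(\Op\) is an idempotent monad. The second is immediate because \(\Op = (\phi \to {-})\) is a right adjoint (to \(\phi \times {-}\)) and hence preserves all limits. It follows that \(\mc{E}_\phi\) is a reflective subcategory with lex reflector, and therefore a subtopos.

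For \(\mc{E}_{\setminus\phi}\) we run the same argument with \(\Cl\). Three facts are needed.
\begin{itemize}
  \item \(\Cl\) is a monad. The unit is \(\eta^{\Cl}\). Multiplication \(\Cl\Cl A \to \Cl A\) is defined by recursion on the quotient-inductive type, sending \(\eta^{\Cl}\) to the identity, \(\ast\,s\) to \(\ast\,s\), and \(\ms{glue}\) to \(\ms{glue}\)/reflexivity.
  \item \(\Cl\) is idempotent. It suffices to show that \(\Cl A\) is closed-modal, i.e.\ that every element of \(\Cl\Cl A\) is in the image of \(\eta^{\Cl}\) and that this map is injective. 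The key lemma is that under \(\phi\), the type \(\Cl A\) is contractible (everything equals \(\ast\)). Hence on the \(\ast\) and \(\ms{glue}\) constructors of \(\Cl\Cl A\) any two candidates agree, and the induction goes through.
  \item \(\Cl\) is lex. This is the real content.
\end{itemize}

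The main obstacle is lexness of \(\Cl\), since pushouts do not in general commute with pullbacks. My first approach would be to argue externally in the topos. \(\Cl A\) is the pushout \(A +_{A\times\phi} \phi\), which is equivalently the join \(A \vee \phi\) computed in the slice. In a topos, pullback along any map preserves colimits, by local cartesian closure and the universality of colimits. I would use this to show that \(\Cl\) preserves the terminal object, since \(\Cl 1 = 1 +_{\phi} \phi = 1\). I would then show that it preserves binary products: \(\Cl(A \times B) \cong \Cl A \times \Cl B\), by decomposing \(\Cl A \times \Cl B\) via universality of colimits into a colimit of pieces \(A\times B\), \(A \times \phi\), \(\phi \times B\), \(\phi\times\phi \cong \phi\), and checking it reduces to the pushout defining \(\Cl(A\times B)\). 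Equalizers are handled similarly.

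As a fallback, I would instead exhibit the Lawvere--Tierney topology directly, \(j(p) = p \vee \phi\) on \(\Omega\). It is straightforward that this is inflationary, idempotent, and preserves \(\wedge\), by distributivity of the Heyting algebra \(\Omega\). I would then identify its sheaves with the closed-modal types using the description of \(\Cl\) as sheafification for this topology, which is the standard identification of \(\Cl\) with the closed subtopos complementary to the open one. In the running example \(\ms{Fam}\), both descriptions can be sanity-checked against Examples~\ref{ex:prop-classifier} and \ref{ex:prop-classifier-closed}: the subtopoi are \(\ms{Set}\) (the base) and \(\ms{Set}\) (the total fibers) respectively.
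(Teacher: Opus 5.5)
The paper gives no proof of this proposition. It states it as the standard fact about open and closed subtoposes, so there is no route of the paper's to compare against. Your argument is the standard one, and your treatment of \(\mc{E}_\phi\) is complete: \(\Op = (\phi \to {-})\) is right adjoint to \(\phi \times {-}\), hence lex, and a lex idempotent monad is exactly a subtopos.

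For \(\mc{E}_{\setminus\phi}\), the monad and idempotence steps are fine. Note, though, that contractibility of \(\Cl A\) under \(\phi\) is already needed to define the multiplication on \(\ms{glue}(y,s)\) for arbitrary \(y \colon \Cl A\), not only for idempotence. Your terminal-object and product computations also go through. From \(\phi \times \Cl B \cong \phi\) you get \(\Cl A \times \Cl B \cong (A \times \Cl B) +_{A \times \phi} \phi\). Pasting this with \(A \times \Cl B \cong (A \times B) +_{A \times B \times \phi} (A \times \phi)\) yields \(\Cl(A \times B)\).

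The step that does not hold up as written is ``equalizers are handled similarly.'' Preserving \(1\) and \(\times\) does not imply lex: the support reflection \(X \mapsto \ms{im}(X \to 1)\) preserves both but not equalizers. Universality of colimits alone also does not finish the equalizer case. Decompose the equalizer \(Q\) of \(\Cl f, \Cl g \colon \Cl A \to \Cl B\) along the pieces \(A\), \(A \times \phi\), \(\phi\) of \(\Cl A\). The \(A \times \phi\)-piece is \(A \times \phi\) and the \(\phi\)-piece is \(\phi\). But the \(A\)-piece is \(\{a \mid \eta^{\Cl}(f a) = \eta^{\Cl}(g a)\}\), so you need the kernel pair of \(\eta^{\Cl}_B\). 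That requires an additional lemma: \(\eta^{\Cl} b = \eta^{\Cl} b'\) holds in \(\Cl B\) iff \((b = b') \vee \phi\). This is an exactness fact about the topos, such as effectivity of quotients or pushouts along monos being pullbacks, not a consequence of pullback-stability of colimits.

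With that lemma, the \(A\)-piece is the union \(E \cup (A \times \phi) \cong E +_{E \times \phi} (A \times \phi)\), where \(E\) is the equalizer of \(f, g\). One more pushout pasting then gives \(Q \cong E +_{E \times \phi} \phi = \Cl E\).

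Your fallback via the closed topology \(j(p) = p \vee \phi\) is the textbook route and does close the argument. To avoid merely citing the identification with closed-modal objects, observe that \(X\) is a \(j\)-sheaf iff \(X \times \phi \to \phi\) is an isomorphism, i.e.\ \(X\) is contractible under \(\phi\). This is equivalent to closed-modality: one direction is your contractibility lemma, and the other is recursion out of \(\Cl X\) sending \(\ast\,s\) to the centre of contraction.

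A small correction: \(\Cl A\) is the join \(A \ast \phi\) in the homotopy-type-theoretic sense (a pushout over \(A \times \phi\)), not a lattice join of subobjects. Indeed \(\eta^{\Cl}_A\) is not mono in general; in \(\ms{Fam}\) its base component is \(A_0 \to 1\).
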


\begin{example}
  For \(\ms{Fam}\), both \(\ms{Fam}_{\phi}\) and \(\ms{Fam}_{\setminus \phi}\) are isomorphic to \(\ms{Set}\).
\end{example}

\subsection{Simple type theory}

We work through the process of developing SOGATs with phase distinction in the context of the motivating example of the simply-typed lambda calculus.

\begin{definition}
  The following is the SOGAT signature for the simply typed lambda calculus \cite{kaposi-2024-secondorder}.
  \begin{align*}
    &\ms{Ty} \colon \ms{Jdg} \\
    &\ms{Tm} \colon \ms{Ty} \to \ms{Jdg}^+ \\
    &\ms{Fun} \colon \ms{Ty} \to \ms{Ty} \to \ms{Ty} \\
    &(\ms{Fun}/\ms{lam},\ms{Fun}/\ms{app},\ms{Fun}/\beta, \ms{Fun}/\eta) \colon (A\,B \colon \ms{Ty}) \to (\ms{Tm}\,A \to \ms{Tm}\,B) \to \ms{Tm}\,(\ms{Fun}\,A\,B)
  \end{align*}
\end{definition}

There is something unsatisfying however about the above signature. In particular, in a classical presentation of the simply typed lambda calculus, types are judged without a context, whereas this fact is nowhere recorded in the SOGAT signature.

Castellan, Clairamba and Dybjer modified the CwF framework to handle simply typed lambda calculus by requiring that \(\ms{Ty}\) was constant. We would like to be able to express something similar within SOGATs.

The idea is the following. If \(\mc{C}\) is a ``category of contexts'' then for a presheaf \(P\) on \(\mc{C}\), \(P(\Gamma)\) represents the judgments of type \(P\) that can be made in context \(\Gamma\). Let \(\mc{C}^\bullet\) be the category given by freely adding a single object \(\bullet\) to \(\mc{C}\), along with a morphism \(\bullet \to \Gamma\) for every \(\Gamma \colon \mc{C}\). The idea is that for a presheaf \(P\) on \(\mc{C}^\bullet\), \(P(\bullet)\) represents the judgments of type \(P\) \emph{that are made without a context}.

The reason that we have morphisms \(\bullet \to \Gamma\) for each \(\Gamma\) is to record the fact that, in general, a judgment type \(P\) may have some portion which is context-dependent and some portion which is context-independent. The map \(P(\bullet \to \Gamma) \colon P(\Gamma) \to P(\bullet)\) may be thought of exhibiting \(P(\Gamma)\) as a family displayed over \(P(\bullet)\), separating out the part of the judgment that may be made independently of \(\Gamma\) from the part which depends on \(\Gamma\).

In the case where \(P(\bullet \to \Gamma)\) is always the identity, we can see that \(P\) does not in fact depend on \(\Gamma\), \(P\) is an a-priori context-dependent judgment which has turned out to in fact be context-independent.

We now state this all again, but in greater formality, and then use the open and closed modalities associated with a proposition in order to be able to capture this dependency \emph{synthetically}.

\begin{definition}
  If \(F \colon \mc{D} \to \mc{C}\) is a functor, then the cocomma category \(\mc{C} {\uparrow}F\) is the category where the objects are the coproduct of the objects of \(\mc{C}\) and \(\mc{D}\), and the morphisms are given by
  \begin{align*}
    &(\mc{C}{\uparrow}F)(c_1,c_2) = \mc{C}(c_1,c_2) \\
    &(\mc{C}{\uparrow}F)(d_1,d_2) = \mc{D}(d_1,d_2) \\
    &(\mc{C}{\uparrow}F)(c,d) = \mc{C}(c, F\,d) \\
    &(\mc{C}{\uparrow}F)(d,c) = \emptyset
  \end{align*}
  There are inclusions \(\iota_0 \colon \mc{C} \to \mc{C}{\uparrow}F\) and \(\iota_1 \colon \mc{D} \to \mc{C}{\uparrow}F\) defined in the obvious way.
\end{definition}

\begin{definition}
  For \(\mc{C}\) a category, let \(\mc{C}^\bullet\) be \(1{\uparrow}!_{\mc{C}}\) where \(!_{\mc{C}}\) is the unique functor \(\mc{C} \to 1\). We denote by \(\bullet\) the object in \(\mc{C}^\bullet\) coming from \(1\). Let \(\simple = \yo \bullet\), which is a proposition in \(\mc{Psh}(\mc{C})\); we call \(\simple\) the ``simple proposition''.
\end{definition}

\begin{proposition}
  The natural transformation \(\simple \to 1\) is representable, where \(1\) is the terminal presheaf on \(\mc{C}^\bullet\).
\end{proposition}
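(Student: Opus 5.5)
We unfold the definition of a representable natural transformation directly: for every object \(X\) of \(\mc{C}^\bullet\) and every \(A \colon \yo X \to 1\), we exhibit an object \(X \tri A\) of \(\mc{C}^\bullet\) with \(\yo(X \tri A) \cong \yo X \times_1 \simple\). Since \(1\) is terminal, \(A\) is the unique map and the pullback is just the product \(\yo X \times \yo \bullet\). So it suffices to show that \(\yo X \times \yo \bullet\) is representable for every object \(X\) of \(\mc{C}^\bullet\), i.e.\ that \(\bullet\) has binary products with every object of \(\mc{C}^\bullet\). We split on whether \(X = \bullet\) or \(X = \Gamma\) for some \(\Gamma \in \mc{C}\).

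First we record the hom-sets of \(\mc{C}^\bullet = 1{\uparrow}!_{\mc{C}}\). We have \(\mc{C}^\bullet(\bullet,\bullet) = 1\). For every \(\Gamma\), \(\mc{C}^\bullet(\bullet,\Gamma) = 1(\bullet,!\Gamma) = 1\) and \(\mc{C}^\bullet(\Gamma,\bullet) = \emptyset\). Morphisms between objects of \(\mc{C}\) are as in \(\mc{C}\). Consequently \(\simple = \yo\bullet\) satisfies \(\simple(\bullet) = 1\) and \(\simple(\Gamma) = \emptyset\). In particular \(\simple\) is a subterminal presheaf, which justifies calling it a proposition.

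Next we compute products pointwise. In the case \(X = \bullet\), we get \(\yo\bullet \times \yo\bullet \cong \yo\bullet\) because \(\yo\bullet\) is subterminal, so we take \(\bullet \tri A = \bullet\). In the case \(X = \Gamma\), the presheaf \(\yo\Gamma \times \yo\bullet\) evaluates at \(Y = \bullet\) to \(\mc{C}^\bullet(\bullet,\Gamma) \times 1 = 1\), and at \(Y = \Delta \in \mc{C}\) to \(\mc{C}(\Delta,\Gamma) \times \emptyset = \emptyset\). This agrees pointwise with \(\yo\bullet\). The restriction maps also agree, because every set involved is either \(1\) or \(\emptyset\), so the agreement is natural and \(\yo\Gamma \times \yo\bullet \cong \yo\bullet\). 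We therefore take \(\Gamma \tri A = \bullet\), with projection to \(\yo\Gamma\) given by the unique morphism \(\bullet \to \Gamma\).

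There is no real obstacle here; the only point that needs care is the direction of the cocomma hom-sets. The morphisms go from \(\bullet\) into each \(\Gamma\) and never back, and it is exactly this that makes \(\yo\bullet\) subterminal and below every representable. Conceptually, \(\bullet\) is an initial-like object that maps uniquely to everything, so it is the meet of \(\yo\bullet\) with any representable.
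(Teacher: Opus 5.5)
Your proof is correct and follows the same route as the paper: you reduce representability of \(\simple \to 1\) to the existence of binary products with \(\bullet\) in \(\mc{C}^\bullet\), and observe that \(X \times \bullet = \bullet\) for every object \(X\). The only difference is that you check this product pointwise from the cocomma hom-sets, which the paper states without computation.
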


\begin{proof}
  First note that while \(\simple\) is by definition a representable presheaf, it is not necessarily the case that the natural transformation \(\simple \to 1\) is representable; a natural transformation \(\yo A \to 1\) is representable if and only if the base category has products with \(A\). But in the category \(\mc{C}^\bullet\), all objects have products with \(\bullet\), namely \(A \times \bullet = \bullet\), so we have our desired result.
\end{proof}

Therefore, working internally to \(\ms{Psh}(\mc{C}^\bullet)\), we have \(\simple\) as a representable proposition (subobject of \(1\)). We now investigate the corresponding open and closed modalities.

\begin{theorem} \label{thm:simple-open-modal}
  For \(P\) a presheaf on \(\ms{Psh}(\mc{C}^\bullet)\), we have \(\Op P \cong \ms{const}\,P(\bullet)\). The open modal presheaves are thus the presheaves isomorphic to constant presheaves.
\end{theorem}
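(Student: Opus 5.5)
We compute the exponential \(\Op P = \simple \to P\) directly in \(\ms{Psh}(\mc{C}^\bullet)\) using the standard formula for exponentials of presheaves. For an object \(X\) of \(\mc{C}^\bullet\),
\[(\simple \to P)(X) \cong \ms{Psh}(\mc{C}^\bullet)(\yo X \times \simple, P).\]
The key step is to identify \(\yo X \times \simple\). By the proof that \(\simple \to 1\) is representable, every object has a product with \(\bullet\), namely \(X \times \bullet = \bullet\). Hence \(\yo X \times \simple \cong \yo(X \times \bullet) = \yo \bullet\), since the Yoneda embedding preserves existing products.

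I will verify this by hand, since it is the only place the shape of \(\mc{C}^\bullet\) enters. For \(Y = \bullet\), the hom-set \((\yo X \times \yo\bullet)(\bullet)\) is a singleton for every \(X\): there is exactly one map \(\bullet \to X\) and one map \(\bullet \to \bullet\). For \(Y = \Gamma \in \mc{C}\), the hom-set \(\mc{C}^\bullet(\Gamma,\bullet)\) is empty, so the product is empty there. This is exactly \(\yo\bullet\), because \(\yo\bullet(\Gamma) = \emptyset\) and \(\yo\bullet(\bullet) = 1\).

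By Yoneda, \((\Op P)(X) \cong \ms{Psh}(\mc{C}^\bullet)(\yo\bullet, P) \cong P(\bullet)\) for every \(X\). Next we check the restriction maps. Given \(f \colon Y \to X\), the action of \(\Op P\) precomposes with \(\yo f \times \simple\). Under the identification above this becomes a map \(\yo\bullet \to \yo\bullet\), and the only such map is the identity, since \(\mc{C}^\bullet(\bullet,\bullet) = 1\). So every restriction map is the identity on \(P(\bullet)\), and \(\Op P \cong \ms{const}\,P(\bullet)\), naturally in \(P\).

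For the second sentence, observe that if \(P \cong \Op P\) then \(P\) is isomorphic to a constant presheaf. Conversely, suppose \(P = \ms{const}\,S\). Then \(P(\bullet) = S\), and we need the unit \(\eta \colon P \to \Op P\) to be an isomorphism. Under the identification, the unit at stage \(X\) sends \(x \in P(X)\) to its restriction along the unique map \(\bullet \to X\). For a constant presheaf this restriction is the identity on \(S\), so the unit is a bijection at each stage and \(P\) is open-modal. Open-modality is invariant under isomorphism, so every presheaf isomorphic to a constant one is open-modal.

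The main obstacle is keeping the naturality bookkeeping honest when transporting the exponential through the identification \(\yo X \times \simple \cong \yo\bullet\). The cleanest route is to note that this identification is natural in \(X\), being induced by the projection \(\yo X \times \simple \to \simple\). The whole computation then reduces to evaluating at \(\bullet\).
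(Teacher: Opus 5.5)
Your proof is correct and follows the paper's route. The paper cites the formula \((\yo A \to P)(\Gamma) = P(\Gamma \times A)\) for exponentiation by a representable with products and plugs in \(\Gamma \times \bullet = \bullet\); that is exactly your Yoneda computation via \(\yo X \times \simple \cong \yo\bullet\). Your checks that the restriction maps are identities, and that the unit is restriction along \(\bullet \to X\) (so constant presheaves are open-modal), fill in details the paper leaves implicit.
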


\begin{proof}
  Exponentiation by a representable presheaf \(\yo A\) for a category which has products with \(A\) is given by the formula \((\yo A \to P)(\Gamma) = P(\Gamma \times A)\). The result then follows from the fact \(\Gamma \times \bullet = \bullet\).
\end{proof}

\begin{theorem} \label{thm:simple-closed-modal}
  For \(P\) a presheaf on \(\ms{Psh}(\mc{C}^\bullet)\),
  \begin{align*}
    &(\Cl P)(\bullet) = 1 \\
    &(\Cl P)(\Gamma) = P(\Gamma)
  \end{align*}
  with the evident action on morphisms. Consequently, a closed-modal presheaf is one for which \(P(\bullet) \cong 1\), and the closed-modal presheaves are in equivalence with \(\ms{Psh}(\mc{C})\).
\end{theorem}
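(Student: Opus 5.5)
The plan is to compute the pushout defining \(\Cl P\) objectwise, since colimits in presheaf categories are computed pointwise. By definition, \(\Cl P\) is the pushout of the projections \(P \times \simple \to P\) and \(P \times \simple \to \simple\). So the first step is to evaluate \(\simple = \yo \bullet\) at each object of \(\mc{C}^\bullet\). We have \(\mc{C}^\bullet(\bullet,\bullet) = 1\), and for \(\Gamma \in \mc{C}\) the cocomma definition gives \(\mc{C}^\bullet(\Gamma, \bullet) = \emptyset\). Thus \(\simple(\bullet) = 1\) and \(\simple(\Gamma) = \emptyset\).

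The second step is to compute the pushout at each stage.

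At \(\bullet\), the span is \(P(\bullet) \leftarrow P(\bullet) \times 1 \to 1\), with left leg the identity. Its pushout is \(1\).

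At \(\Gamma \in \mc{C}\), the span is \(P(\Gamma) \leftarrow \emptyset \to \emptyset\). Its pushout is \(P(\Gamma) + \emptyset \cong P(\Gamma)\).

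For the action on morphisms:
\begin{itemize}
  \item A morphism \(\Gamma \to \Delta\) in \(\mc{C}\) acts as \(P\) does.
  \item A morphism \(\bullet \to \Gamma\) must land in \((\Cl P)(\bullet) = 1\), so it is the unique map.
\end{itemize}
This is the ``evident action'' in the statement. Naturality of the pushout maps is automatic from pointwise computation, but I will double-check that the canonical induced maps between the pointwise pushouts agree with this description.

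The third step handles the closed-modal characterisation. The unit \(\eta^{\Cl} \colon P \to \Cl P\) is, by the computation, the identity at each \(\Gamma\) and the map \(P(\bullet) \to 1\) at \(\bullet\). It is therefore an isomorphism exactly when \(P(\bullet) \cong 1\).

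For the equivalence with \(\ms{Psh}(\mc{C})\), I restrict along \(\iota_0 \colon \mc{C} \to \mc{C}^\bullet\) in one direction. In the other direction, I extend a presheaf \(Q\) on \(\mc{C}\) by setting \(Q(\bullet) = 1\), with the maps \(Q(\Gamma) \to 1\) forced. Morphisms of \(\mc{C}^\bullet\) only go from \(\bullet\) to \(\Gamma\), never back, so this defines a functor \((\mc{C}^\bullet)\op \to \ms{Set}\).

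Both composites are isomorphic to the identity on objects. On natural transformations, the component at \(\bullet\) is forced because its target is terminal, so the correspondence is also full and faithful.

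The main, if mild, obstacle is keeping the variance straight, so as to correctly identify \(\yo\bullet(\Gamma)\) as empty rather than a singleton. If one instead got a singleton at \(\Gamma\), the result would flip to the description of \(\Op\). I will anchor this by comparing with Example~\ref{ex:prop-classifier-closed}, where the base component becomes \(1\) and the fibres keep everything, which matches the computation above.
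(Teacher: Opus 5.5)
Your proof is correct. It reaches the formula by a slightly different route from the paper.

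The paper posits the candidate \(\Cl P\) and verifies the pushout's universal property by hand. Given a cocone \(f \colon \simple \to Q\), \(g \colon P \to Q\), it takes \(h_\Gamma = g_\Gamma\) away from \(\bullet\), and observes that commutativity forces \(g_\bullet\) to be constant, which is used to define \(h_\bullet\).

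You instead invoke the general fact that colimits in \(\ms{Psh}(\mc{C}^\bullet)\) are computed objectwise, and then evaluate the one-object spans. The mathematical content is the same: at \(\bullet\) the span has an identity leg, so it collapses to a point, and at each \(\Gamma\) it is trivial because \(\simple(\Gamma)=\emptyset\). Your packaging has some advantages, though:
\begin{itemize}
\item It gives naturality and uniqueness of the mediating map for free; the paper leaves both unchecked.
\item It handles the degenerate case \(P(\bullet)=\emptyset\) cleanly. There the paper's ``constant value of \(g_\bullet\)'' is undefined, and \(h_\bullet\) must really be \(f_\bullet(\ast)\).
\item You prove the two ``consequently'' clauses, namely the characterisation via \(\eta^{\Cl}\) and the equivalence with \(\ms{Psh}(\mc{C})\). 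The paper states these without argument.
\end{itemize}

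Two small slips do not affect the argument. First, under the paper's cocomma conventions the inclusion \(\mc{C} \to \mc{C}^\bullet = 1{\uparrow}!_{\mc{C}}\) is \(\iota_1\), not \(\iota_0\). Second, your aside about the wrong variance is off. Getting a singleton at every \(\Gamma\) would not give the description of \(\Op\). It would force \(\simple = 1\), since \(\simple(\bullet)=1\) anyway, and then \(\Cl P\) would be terminal.
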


\begin{proof}
  We show that with this definition, \(\Cl P\) satisfies the following pushout diagram
\[\begin{tikzcd}
	{\simple \times P} & P \\
	{\simple} & {\Cl P}
	\arrow[from=1-1, to=1-2]
	\arrow[from=1-1, to=2-1]
	\arrow[from=1-2, to=2-2]
	\arrow[from=2-1, to=2-2]
	\arrow["\lrcorner"{anchor=center, pos=0.125, rotate=180}, draw=none, from=2-2, to=1-1]
\end{tikzcd}\]
  Consider some presheaf \(Q\) with maps \(f \colon \simple \to Q\) and \(g \colon P \to Q\) such that the pushout diagram commutes. We wish to define \(h \colon \Cl P \to Q\). For \(\Gamma \neq \bullet\), let \(h_\Gamma \colon (\Cl P)(\Gamma) \to Q(\Gamma)\) be \(g_\Gamma\), as \((\Cl P)(\Gamma) = P(\Gamma)\). Then commutation of the pushout diagram implies that \(g_\bullet \colon P(\bullet) \to Q(\bullet)\) must in fact be constant, so we can define \(h_\bullet \colon 1 \to Q(\bullet)\) to be the constant value of \(g_\bullet\). We are done.
\end{proof}

From all of this, we can make the following definition.

\begin{definition}
  A \defcase{phased category with representables} (PCwR) is a CwR \(\mc{P}\) equipped with a representable proposition \(\simple_{\mc{P}} \into 1\) such that the associated closed modality exists (note that the open modality always exists because representable propositions are exponentiable).
\end{definition}

\begin{definition}
  A \defcase{simple model} of a PCwR \(\mc{P}\) is a category \(\mc{C}\) along with a CwR morphism \(F \colon \mc{P} \to \ms{Psh}(\mc{C}^\bullet)\) that sends \(\simple_{\mc{P}}\) to \(\simple_{\ms{Psh}(\mc{C}^\bullet)}\) and preserves the open and closed modalities.
\end{definition}

The next definition is ``conjectural'' in the sense that there is significant work to be done to fill in the technical details. Such work, in greater generality, is in progress by other authors and so it would not be prudent to duplicate a special case here. However, the basic idea is clear to a reader familiar with SOGATs and phase distinctions.

\begin{definition}
  A \defcase{SOGAT with phase distinction} is a signature in the theory of signatures for SOGATs extended by a proposition and its associated open and closed modalities \(\Op\) and \(\Cl\).
\end{definition}

We may now finally give a proper SOGAT signature for the simply typed lambda calculus.

\begin{definition} \label{def:stlc-simple}
  The simply typed lambda calculus may be presented as a SOGAT with phase distinction via the following signature.
  \begin{align*}
    &\ms{Ty} \colon \ms{Jdg} \\
    &\ms{Ty\text{-}is\text{-}open} \colon \ms{IsEquiv}(\eta^{\Op}_{\ms{Ty}}) \\
    &\ms{Tm} \colon \ms{Ty} \to \ms{Jdg}^+ \\
    &\ms{Tm\text{-}is\text{-}closed} \colon (A \colon \ms{Ty}) \to \ms{IsEquiv}(\eta^{\Cl}_{\ms{Tm}\,A}) \\
    &\ms{Fun} \colon \ms{Ty} \to \ms{Ty} \to \ms{Ty} \\
    &(\ms{lam},\ms{app},\ms{Fun}/\beta, \ms{Fun}/\eta) \colon (A\,B \colon \ms{Ty}) \to (\ms{Tm}\,A \to \ms{Tm}\,B) \to \ms{Tm}\,(\ms{Fun}\,A\,B)
  \end{align*}
\end{definition}

As a convenience, we let \(\ms{Jdg}_{\Op} = (A \colon \ms{Jdg}) \times \ms{IsEquiv}\,(\eta^{\Op}_A)\) be the universe of open-modal judgments, and let \(\ms{Jdg}^+_{\Op}\), \(\ms{Jdg}_{\Cl}\), \(\ms{Jdg}_{\Cl}^+\) be similarly defined, so that we could instead write \(\ms{Ty} \colon \ms{Jdg}_{\Op}\) in the above.

\begin{theorem}
  Simple models of the signature in Definition~\ref{def:stlc-simple} are equivalent to SCwFs with function types.
\end{theorem}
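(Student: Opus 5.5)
We propose to unfold what a simple model amounts to concretely and match it component by component with an SCwF carrying function types. A simple model is a category \(\mc{C}\) together with a CwR morphism \(F\) from the signature into \(\ms{Psh}(\mc{C}^\bullet)\) that respects \(\simple\), \(\Op\) and \(\Cl\). Because the syntactic CwR is freely generated by the signature, such an \(F\) is fixed by where it sends the generators, subject to the stated equations. So we have to interpret: \(\ms{Ty}\) as a presheaf \(T\) on \(\mc{C}^\bullet\) with \(\eta^{\Op}\) invertible; \(\ms{Tm}\) as a representable map \(\dot{E} \to T\) whose fibres are closed-modal; and \(\ms{Fun}\) together with the \(\ms{lam}/\ms{app}\) isomorphism.

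The first step applies Theorem~\ref{thm:simple-open-modal}. Since \(\ms{Ty}\) is open-modal, \(T \cong \ms{const}\,T(\bullet)\), so the types form a single set \(\ms{Ty}_\cdot := T(\bullet)\) and every restriction map is the identity. This is precisely the constancy condition defining an SCwF.

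The second step applies Theorem~\ref{thm:simple-closed-modal}. For each \(A\), the fibre of \(\dot{E}\) over \(A\) is closed-modal, so its value at \(\bullet\) is \(1\). It is therefore the same data as a presheaf \(\ms{Tm}\,A\) on \(\mc{C}\), namely the value of \(\dot E\) on \(\Gamma \in \mc{C}\) over \(A\). Representability of \(\dot{E} \to T\) is then checked only at the objects \(\Gamma \in \mc{C}\). Pulling back along \(A \colon \yo\Gamma \to T\) should give \(\yo(\Gamma \tri A)\) for some \(\Gamma\tri A \in \mc{C}\). The case \(\bullet\) takes care of itself, because the closed fibre at \(\bullet\) is trivial and \(\yo\bullet\) pulls back to \(\yo\bullet\).

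Before that argument goes through, we must check that the representing object does not turn out to be \(\bullet\). Any presheaf arising from an object of \(\mc{C}\) is nonempty at that object, whereas \(\yo\bullet(\Gamma)=\emptyset\). Running the comparison the other way, representability in \(\mc{C}\) gives representability in \(\mc{C}^\bullet\), since \(\yo_{\mc{C}^\bullet}\Gamma\) at \(\bullet\) is a singleton. This matches the closed fibre being \(1\) at \(\bullet\).

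A terminal object of \(\mc{C}\) is not generated by any part of the signature, so it has to be assumed in the definition of model, or else SCwF should be read without one. We will point this out.

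The third step handles function types. \(\ms{Fun}\) becomes a map \(T\times T\to T\) between constant presheaves, which is just a function \(\ms{Ty}_\cdot\times\ms{Ty}_\cdot\to\ms{Ty}_\cdot\). The \(\ms{lam}/\ms{app}\) isomorphism is an isomorphism of presheaves over \(T\times T\), from the exponential \((\ms{Tm}\,A\to\ms{Tm}\,B)\) to \(\ms{Tm}(\ms{Fun}\,A\,B)\); the exponential exists because \(\ms{Tm}\,A\) is representable. Evaluated at \(\Gamma\), the standard formula for exponentials by representable maps gives \(\ms{Tm}(\Gamma\tri A)\,B\cong\ms{Tm}\,\Gamma\,(\ms{Fun}\,A\,B)\), natural in \(\Gamma\). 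This is exactly Definition~\ref{def:function-types} in the constant-type case, just as in Proposition~\ref{prop:scwf-pi}. At \(\bullet\) both sides are \(1\), because both are closed-modal: the exponential of closed-modal objects along a representable map remains closed-modal, as \(\Cl\) is a lex reflective subcategory and exponential ideals are stable.

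Finally, we go in the converse direction and extend an SCwF \((\mc{C},\ms{Ty},\ms{Tm})\) to \(\mc{C}^\bullet\). Types are sent to the constant presheaf, and terms to the presheaf that is \(1\) at \(\bullet\). Preservation of \(\Op\), \(\Cl\) and \(\simple\) is immediate from the two theorems. The two constructions are inverse up to isomorphism, and morphisms of models correspond to SCwF morphisms that preserve \(\ms{Fun}\). We expect the main obstacle to be the second step, relating representability in \(\ms{Psh}(\mc{C}^\bullet)\) to context extension in \(\mc{C}\), together with the attendant bookkeeping about the terminal object. Our plan there is to compute the pullback objectwise, and to use the fact that no object of \(\mc{C}\) maps to \(\bullet\).
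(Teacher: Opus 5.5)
Your overall route is the paper's own. The paper's proof says only that, by Theorems~\ref{thm:simple-open-modal} and~\ref{thm:simple-closed-modal}, a simple model of \(\ms{Ty} \colon \ms{Jdg}_{\Op}\), \(\ms{Tm} \colon \ms{Ty} \to \ms{Jdg}^+_{\Cl}\) is an SCwF, and that function types then follow as in Proposition~\ref{prop:scwf-pi}. Your first step, your exponential computation, the converse construction and the terminal-object caveat are all fine.

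The gap is exactly at the point you flagged, and your argument there does not work. For \(A \colon \yo\Gamma \to T\) with \(\Gamma \in \mc{C}\), the pullback \(P\) has \(P(\bullet) = 1\) and \(P(\Delta) = \mc{C}(\Delta,\Gamma) \times \ms{Tm}\,\Delta\,A\). The identity on \(\Gamma\) gives an element of \(P(\Gamma)\) only once it is paired with a term of type \(A\). So if \(A\) has no terms in any context mapping to \(\Gamma\), then \(P \cong \yo\bullet\), which is a perfectly good representable in \(\ms{Psh}(\mc{C}^\bullet)\). The fact that \(\yo\Gamma'\) is nonempty at \(\Gamma'\) only shows the representing object is determined by \(P\), not that it lies in \(\mc{C}\). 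And ``no object of \(\mc{C}\) maps to \(\bullet\)'' is precisely what lets \(\yo\bullet\) occur.

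Nothing in the signature rules this out. Take any \(\mc{C}\) and let \(T = \ms{const}\,\{\ms{void},\ms{unit}\}\), with \(\ms{Tm}\,\ms{void} = \yo\bullet\), \(\ms{Tm}\,\ms{unit} = 1\), \(\ms{Fun}\,\ms{void}\,B = \ms{unit}\) and \(\ms{Fun}\,\ms{unit}\,B = B\). Both term presheaves are closed-modal, and their pullbacks along \(\yo\Gamma\) are \(\yo\bullet\) and \(\yo\Gamma\). Since \(X^{\yo\bullet} = \Op X \cong 1\) for closed-modal \(X\), the \(\ms{lam}/\ms{app}\) isomorphisms hold. This is a simple model in the paper's sense, but \(\Gamma \tri \ms{void}\) cannot exist in \(\mc{C}\). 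In an SCwF one always has \(\ms{q} \in \ms{Tm}\,(\Gamma \tri A)\,A\), so no type is empty in every context.

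Your forward construction therefore only yields context extension valued in \(\mc{C}^\bullet\), and the two constructions are not mutually inverse. The paper's sketch passes over this too, so the statement needs an extra hypothesis. One option is to equip \(\ms{Psh}(\mc{C}^\bullet)\) with the smaller, still pullback-stable class of maps whose pullbacks along \(\yo\Gamma\), for \(\Gamma \in \mc{C}\), are represented by objects of \(\mc{C}\). Another is to compare with SCwFs on \(\mc{C}^\bullet\) in which \(\bullet\) is an absorbing ``inconsistent'' context with \(\ms{Tm}\,\bullet\,A = 1\). With either adjustment, the rest of your argument goes through.
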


\begin{proof}
  First of all, it is not hard to see that a simple model of the signature
  \begin{align*}
    &\ms{Ty} \colon \ms{Jdg}_{\Op} \\
    &\ms{Tm} \colon \ms{Ty} \to \ms{Jdg}^+_{\Cl}
  \end{align*}
  is an SCwF, following the characterizations of open-modal and closed-modal types in Theorems~\ref{thm:simple-open-modal} and \ref{thm:simple-closed-modal}. Then the result follows from a similar argument to Proposition~\ref{prop:scwf-pi}.
\end{proof}

We can now give an alternative synthetic proof of Proposition~\ref{prop:scwf-pi}, which states that for a SCwF, having function types and having dependent function types is equivalent.

\begin{proof}
  In one direction, obviously having dependent function types implies having regular function types. To prove the other direction, let us work synthetically in the signature of Definition~\ref{def:stlc-simple}. We first synthetically prove that in fact any function \(\ms{Tm}\,A \to \ms{Ty}\) must be constant. Let \(B\) be such a function. Then as \(\ms{Ty} \to \Op\,\ms{Ty}\) is an equivalence, to characterize the behavior of \(B\) it suffices to work under the assumption of \(s \colon \S\). But with \(s \colon \S\) in hand, we then have \(\ms{Tm}\,A \cong 1\), because \(\ms{Tm}\,A\) is closed-modal. Thus, \(B\) is constant.

  This shows we can define \(\Pi \colon (A \colon \ms{Ty}) \to (\ms{Tm}\,A \to \ms{Ty}) \to \ms{Ty}\) in terms of \(\ms{Fun}\), and a similar argument show that \(\ms{lam}\), \(\ms{app}\), etc. for \(\Pi\) may also be defined in terms of the ones for \(\ms{Fun}\).
\end{proof}

\subsection{Split-context type theory} \label{sec:system-f}

Now that we have gone through all that pain in order to synthetically characterize simply typed lambda calculus, the reader may be relieved to know that the method extends straightforwardly to a split context type theory like System F. The idea is that instead of the phase distinction controlling what is judged in no context, the phase distinction controls what is judged in the \emph{kind context}. Of course, as mentioned before the same construction can be used for predicate logic, but we focus on the System F case for concreteness.

All that changes from the simple case to the System F case is the definition for model.

\begin{definition}
  Suppose \(\bb{C}_1 \colon \mc{C}_0\op \to \ms{Cat}\) is an indexed category with terminal objects in the base and fibers. Let \(\pi_0 \colon \mc{C}_1 \to \mc{C}_0\) be the associated fibration, and let \(\mc{C}\) be the cograph of \(\pi_0\); we call \(\mc{C}\) the \defcase{extended total space} of \(\bb{C}_1\).
\end{definition}

Note that the \(\mc{C}^\bullet\) used in the last section is the extended total space of the constant functor \(\mc{C} \colon 1 \to \ms{Cat}\).

\begin{definition}
  For an indexed category \(\bb{C}_1 \colon \mc{C}_0\op \to \ms{Cat}\), let \(\mc{C}\) be the extended total space of \(\bb{C}_1\). Then define \(\base \colon \ms{Psh}(\mc{C})\) by
  \begin{align*}
    &\base(\Gamma_0 \colon \mc{C}_0) = 1 \\
    &\base((\Gamma_0, \Gamma_1) \colon \mc{C}_1) = \emptyset
  \end{align*}
  We call \(\base\) the \defcase{base proposition}.
\end{definition}

\begin{proposition} \label{prop:base-is-representable}
  The base proposition \(\base\) is a representable subobject of 1.
\end{proposition}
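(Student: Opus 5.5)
The plan is to unwind the definition of a representable natural transformation in the presheaf category on the extended total space \(\mc{C}\), computing the pullbacks directly as in the proof for \(\simple\). Recall that \(\mc{C}\) is the cocomma category of \(\pi_0 \colon \mc{C}_1 \to \mc{C}_0\). Its objects are those of \(\mc{C}_0\) and of \(\mc{C}_1\). A morphism \(\Delta_0 \to (\Gamma_0,\Gamma_1)\) is a morphism \(\Delta_0 \to \Gamma_0\) in \(\mc{C}_0\), and there are no morphisms from objects of \(\mc{C}_1\) to objects of \(\mc{C}_0\).

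First I check that \(\base\) is a well-defined subobject of \(1\). It is pointwise a subsingleton. Functoriality is automatic: the only nontrivial restriction maps go from \(\base(\Gamma_0,\Gamma_1) = \emptyset\) to \(\base(\Delta_0) = 1\), and there are no morphisms from \(\mc{C}_1\)-objects to \(\mc{C}_0\)-objects that would force a map \(1 \to \emptyset\). I will also observe that \(\base \cong \yo(\cdot_0)\), where \(\cdot_0\) is the terminal object of \(\mc{C}_0\). Indeed \(\mc{C}(\Delta_0,\cdot_0) = 1\), and \(\mc{C}((\Delta_0,\Delta_1),\cdot_0) = \emptyset\).

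Next I must show that for each object \(X\) of \(\mc{C}\), the pullback \(\yo X \times \base\) of \(\base \to 1\) along \(\yo X \to 1\) is representable. As in the proof for \(\simple\), it suffices to exhibit a product of \(X\) with \(\cdot_0\) in \(\mc{C}\). I split into two cases.
\begin{itemize}
  \item If \(X = \Gamma_0 \in \mc{C}_0\), then \(\yo \Gamma_0\) is already empty on \(\mc{C}_1\)-objects, so \(\yo\Gamma_0 \times \base \cong \yo \Gamma_0\).
  \item If \(X = (\Gamma_0,\Gamma_1) \in \mc{C}_1\), then \(\yo X \times \base\) vanishes on \(\mc{C}_1\)-objects. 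On \(\Delta_0\) it is \(\mc{C}(\Delta_0,(\Gamma_0,\Gamma_1)) = \mc{C}_0(\Delta_0,\Gamma_0)\), naturally in \(\Delta_0\), so it is \(\yo \Gamma_0 = \yo(\pi_0 X)\).
\end{itemize}
Thus \(X \times \base\) is represented by \(\pi_0 X\) in the second case and by \(X\) itself in the first.

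The only step needing care is the naturality in the second case. The identification \(\mc{C}(\Delta_0,(\Gamma_0,\Gamma_1)) \cong \mc{C}_0(\Delta_0,\Gamma_0)\) must be compatible with precomposition by morphisms of \(\mc{C}_0\), and this is exactly how composition is defined in the cocomma category. Since every presheaf map out of \(\base\) also lands in \(\mc{C}_0\)-components, I do not expect any real obstacle. The main thing to get right is the direction of the hom-sets in the cograph, which is what makes \(\base\) a sub-terminal rather than something with nontrivial \(\mc{C}_1\)-part.
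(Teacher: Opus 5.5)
Your proof is correct and matches the paper's. It uses the same case split, takes \(\Gamma_0\) as the representing object in both cases, and rests on the same fact: anything mapping into \(\base\) is empty on \(\mc{C}_1\)-objects, which you check pointwise. Your detour through \(\base \cong \yo(\cdot_0)\) uses the terminal object of \(\mc{C}_0\), which the standing hypothesis on the extended total space supplies, but your two case computations never need it; this fits the paper's remark that \(\base \to 1\) is representable even when \(\base\) itself is not.
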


\begin{proof}
  Given \(\Gamma \colon \mc{C}\) we must find \(\Gamma'\) to make the following diagram a pullback
  \[\begin{tikzcd}
      {\yo \Gamma'} & \base \\
      {\yo \Gamma} & 1
      \arrow[from=1-1, to=1-2]
      \arrow[from=1-1, to=2-1]
      \arrow["\lrcorner"{anchor=center, pos=0.125}, draw=none, from=1-1, to=2-2]
      \arrow[from=1-2, to=2-2]
      \arrow[from=2-1, to=2-2]
    \end{tikzcd}\]
  In the case \(\Gamma = \Gamma_0 \colon \mc{C}_0\), then we let \(\Gamma' = \Gamma_0\). In the case \(\Gamma = (\Gamma_0, \Gamma_1) \colon \mc{C}_1\), then we let \(\Gamma' = \Gamma_0\).

  Both of these work for the same reason: any presheaf with a map into \(\base\) must have empty fibers over \(\mc{C}_1\).
\end{proof}

Note that \(\base\) is a representable presheaf if and only if \(\mc{C}_0\) has a terminal object, but the natural transformation \(\base \to 1\) is always representable.

\begin{definition}
  A \defcase{indexed model} of a PCwR \(\mc{T}\) consists of an indexed category \(\bb{C}_1 \colon \mc{C}_0\op \to \ms{Cat}\) along with a PCwR morphism from \(\mc{T}\) to \(\ms{Psh}(\mc{C})\), where \(\ms{Psh}(\mc{C})\) is equipped with the PCwR structure from Proposition~\ref{prop:base-is-representable}.
\end{definition}

We can see that an indexed model where \(\mc{C}_0 = 1\) is precisely a simple model.

\begin{definition} \label{def:split-context-signature}
  The basic judgment structure of a split-context type theory is as follows, in the form of a SOGAT with phase distinction. 
  \begin{align*}
    &\ms{Ty}_0 \colon \ms{Jdg}_{\Op} \\
    &\ms{Tm}_0 \colon \ms{Ty}_0 \to \ms{Jdg}_{\Op}^+ \\
    &\ms{Ty}_1 \colon \ms{Jdg}_{\Op} \\
    &\ms{Tm}_1 \colon \ms{Ty}_1 \to \ms{Jdg}_{\Cl}^+
  \end{align*}
  We call this the \defcase{split-context signature}.
\end{definition}

\begin{proposition}
  Indexed models of the signature in Definition~\ref{def:split-context-signature} are in equivalence with CwF-indexed SCwFs (see Definition~\ref{def:cwf-indexed-cwf}).
\end{proposition}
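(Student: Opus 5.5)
We will generalize the argument for simple models, replacing \(\mc{C}^\bullet\) by the extended total space \(\mc{C}\) of \(\bb{C}_1\) and \(\simple\) by \(\base\). The first step is to compute the two modalities on \(\ms{Psh}(\mc{C})\), in analogy with Theorems~\ref{thm:simple-open-modal} and \ref{thm:simple-closed-modal}.

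For the open modality we use the representability of \(\base \to 1\) from Proposition~\ref{prop:base-is-representable}. For \(\Gamma\) in \(\mc{C}\), its pullback is the object \(\Gamma_0\). Exponentiation by a representable map then gives \((\Op P)(\Gamma_0) = P(\Gamma_0)\) and \((\Op P)(\Gamma_0,\Gamma_1) = P(\Gamma_0)\), with restriction along the cograph morphism \(\Gamma_0 \to (\Gamma_0,\Gamma_1)\). Hence the open-modal presheaves are exactly those pulled back from \(\ms{Psh}(\mc{C}_0)\) along \(\pi_0\). For the closed modality we repeat the pushout verification of Theorem~\ref{thm:simple-closed-modal}: we set \((\Cl P)(\Gamma_0) = 1\) and \((\Cl P)(\Gamma_0,\Gamma_1) = P(\Gamma_0,\Gamma_1)\), and check the universal property componentwise. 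Hence the closed-modal presheaves are those that are trivial over \(\mc{C}_0\), and they are equivalent to presheaves on the total category \(\mc{C}_1\).

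Next we interpret the signature of Definition~\ref{def:split-context-signature}. We read \(\ms{Ty}_0 \colon \ms{Jdg}_{\Op}\) and \(\ms{Tm}_0 \colon \ms{Ty}_0 \to \ms{Jdg}^+_{\Op}\) as a presheaf on \(\mc{C}_0\) with a representable family over it, all pulled back to \(\mc{C}\). Representability restricted to \(\mc{C}_0\) yields a CwF structure \(\mc{C}_0 \tri A_0\) on the base. Since the family is open-modal, comprehension at \((\Gamma_0,\Gamma_1)\) must compute to an object over \(\Gamma_0 \tri A_0\), namely the reindexing \(\Gamma_1[\ms{p}]\); this is what makes \(\bb{C}_1\) a functor into CwFs. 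Similarly, \(\ms{Ty}_1\) is open-modal and is therefore a presheaf on \(\mc{C}_0\), which matches \(\ms{Ty}_1\,\Gamma_0\) being independent of \(\Gamma_1\), i.e.\ simple fibers. \(\ms{Tm}_1\) is closed-modal and is therefore a presheaf on \(\mc{C}_1\) over \(\pi_0^*\ms{Ty}_1\). Its representability gives context extension in each fiber, \((\Gamma_0,\Gamma_1 \tri A_1)\), stable under reindexing because reindexing is given by morphisms in the total category. For the converse we build the presheaves from a CwF-indexed SCwF in the same way and check representability case by case on the two kinds of objects of \(\mc{C}\). The remaining work is to verify that the two constructions are mutually inverse up to equivalence.

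The main obstacle is the comprehension step. Representability is required at \emph{every} object of \(\mc{C}\), including fiber objects, and for \(\ms{Tm}_0\) we must show that the pullback at \((\Gamma_0,\Gamma_1)\) exists exactly when the fiber CwFs admit reindexing along \(\ms{p}_{A_0}\), and that it is then forced to equal \((\Gamma_0\tri A_0, \Gamma_1[\ms{p}])\). I would handle this first by computing the pullback of \(\yo(\Gamma_0,\Gamma_1)\) along the open-modal map \(\ms{Tm}_0 \to \ms{Ty}_0\) using cartesian lifts in the fibration \(\pi_0\). I would then show that maps from both kinds of objects into the candidate object have the correct universal property.
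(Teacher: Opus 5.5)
There is a genuine gap where you say that representability of \(\ms{Tm}_1\) ``gives context extension in each fiber, \((\Gamma_0,\Gamma_1\tri A_1)\).'' Representability in \(\ms{Psh}(\mc{C})\) only says that the pullback \(P\) of \(\ms{Tm}_1 \to \ms{Ty}_1\) along \(A_1 \colon \yo(\Gamma_0,\Gamma_1) \to \ms{Ty}_1\) is \(\yo Y\) for \emph{some} object \(Y\) of the extended total space. Nothing forces \(Y\) into \(\mc{C}_1\).

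Because the fibers \(\ms{Tm}_1\,A_1\) are closed-modal, \(P(\Delta_0) \cong \mc{C}_0(\Delta_0,\Gamma_0)\) at base objects, so the base part of \(Y\) is \(\Gamma_0\) up to isomorphism. At fiber objects, however, \(P(\Delta_0,\Delta_1)\) consists of a map to \((\Gamma_0,\Gamma_1)\) together with a term of the reindexed \(A_1\). When all of these sets are empty, \(P \cong \yo\Gamma_0\) is represented by the \emph{base} object \(\Gamma_0\).

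This really happens. Take \(\mc{C}_0 = 1\) and \(\bb{C}_1(\ast) = 1\), so that \(\mc{C}\) is the two-object category \(\bullet \to \star\). Put \(\ms{Ty}_0 = \ms{Tm}_0 = \ms{Ty}_1 = 1\) and \(\ms{Tm}_1 = \Cl\emptyset = \base\). Every clause of the split-context signature holds: \(\base\) is closed-modal, and \(\base \to 1\) is representable by Proposition~\ref{prop:base-is-representable}, with the pullback at \(\star\) represented by \(\bullet\). So this is an indexed model. Yet the fiber \(\bb{C}_1(\ast) = 1\), with one type and no terms, admits no comprehension, since \(\bb{C}_1(\ast)(\star,\star\tri A_1)\) would have to be empty.

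So this step cannot be completed as stated. You need either an extra hypothesis (each \(\ms{Tm}_1\)-pullback at a fiber object is represented by an object of \(\mc{C}_1\)), or a weaker target in which fiber comprehension may collapse to the base context for types uninhabited over the given context. The paper omits its proof, noting only that the dependency patterns agree, so it does not settle this point either.

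Conversely, the step you single out as the main obstacle is automatic. \(\bb{C}_1\) is part of the data of an indexed model, so reindexing along \(\ms{p}_{A_0}\) always exists. Your cartesian-lift computation then shows that the pullback of \(\ms{Tm}_0 \to \ms{Ty}_0\) along \(A_0 \colon \yo(\Gamma_0,\Gamma_1) \to \ms{Ty}_0\) is always \(\yo(\Gamma_0\tri A_0,\Gamma_1[\ms{p}])\); there is no ``exactly when'' condition. Likewise, at a base object the \(\ms{Tm}_0\)-pullback is empty on \(\mc{C}_1\), and that is what forces base comprehension to land in \(\mc{C}_0\).

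Two smaller corrections:
\begin{itemize}
\item It is the fibers \(\ms{Tm}_1\,A_1\), not the total presheaf, that are closed-modal. The total presheaf agrees with \(\ms{Ty}_1\) on base objects.
\item The lift of \(\bb{C}_1\) to \(\ms{CwF}\) comes from restricting \(\ms{Ty}_1\) and \(\ms{Tm}_1\) along cartesian morphisms, not from the \(\ms{Tm}_0\) comprehension.
\end{itemize}
Your computations of \(\Op\) and \(\Cl\) on \(\ms{Psh}(\mc{C})\) are correct.
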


\begin{proof}
  We omit this long and fairly technical proof; the reader however can at least check quickly that the type dependencies of the various sets involved in CwF-indexed SCwF are precisely the same type dependencies found in an indexed model of the split-context signature.
\end{proof}

Now, System F does not have dependent kinds. As mentioned before, System F is a SCwF-indexed SCwF, so Definition~\ref{def:split-context-signature} does not quite restrict enough to get us down to the dependency structure of System F.

We could fix this by moving to a system with two propositions \(\simple\) and \(\base\), where \(\simple \leq \base\), and insisting that \(\ms{Ty}_0\) was open-modal with respect to \(\simple\) while \(\ms{Ty}_1\) was open-modal with respect to \(\base\). There is work in progress towards a theory of ``SOGATs equipped with an arbitrary meet semilattice of propositions'' that would make this trivial to do, and the structure of the models would be presheaves on the collage of a diagram \(\mc{C}_1 \to \mc{C}_0 \to 1\).

However, for our application, this is not necessary. Specifically, in order to make type checking decidable in the context of problematic term-level features, we only need to restrict the dependency of types on terms, not kinds on constructors. Perhaps at a later point it will be desirable to make this further restriction (for instance, if it is useful in compilation to have non-dependent kinds), but perhaps dependent kinds are indeed a desirable feature.

Nevertheless, in the same way that a traditional model of System F forms a model of the SOGAT for System F that has no restrictions on type dependency, a traditional model of System F still forms a model of the signature in Definition~\ref{def:split-context-signature}.

We finish the section by discussing type formers. Just like in the context of CwF-indexed CwFs, we may import various traditional type formers in for \((\ms{Ty}_0, \ms{Tm}_0)\) (``at the base level'') or \((\ms{Ty}_1, \ms{Tm}_1)\) (``at the index level''). The interesting case is type formers that involve both families.

\begin{definition}
  In the context of the split-context signature, an \defcase{indexed-type universe} is a base type \(\star \colon \ms{Ty}_0\) such that \(\ms{Tm}_0 \star \cong \ms{Ty}_1\).
\end{definition}

\begin{definition}
  Universal quantification is given by the following extension of the split-context signature.
  \begin{align*}
    &\forall \colon (A \colon \ms{Ty}_0) \to (\ms{Tm}_0\,A \to \ms{Ty}_1) \to \ms{Ty}_1 \\
    &(\forall/\ms{lam}, \forall/\ms{app}, \forall/\beta, \forall/\eta) \colon \{A \colon \ms{Ty}_0\} \to \{B \colon \ms{Tm}_0\,A \to \ms{Ty}_1\} \to \\
    &\quad ((a \colon \ms{Tm}_0\,A) \to \ms{Tm}_1\,(B\,a)) \cong \ms{Tm}_1(\forall\,A\,B)
  \end{align*}
\end{definition}

System F is then characterized by having an indexed-type universe, universal quantification, and index-level function types. System F\(_\omega\) is all of that along with base-level function types.

\subsection{A module system via the synthetic phase distinction}

The essential idea of the Harper, Mitchell, and Moggi module system \cite{harper-1990-higherorder} is the following.

\begin{definition}
  In the context of the split-context signature, we may make the definitions
  \begin{align*}
    &\ms{Ty} \colon \ms{Jdg}_{\Op} \\
    &\ms{Ty} = (A_0 \colon \ms{Ty}_0) \times (\ms{Tm}_0\,A_0 \to \ms{Ty}_1) \\
    &\ms{Tm} \colon \ms{Ty} \to \ms{Jdg}^+ \\
    &\ms{Tm}\,A = (a_0 \colon \ms{Tm}_0\,A_0) \times \ms{Tm}_1\,(A_1\,a_0)
  \end{align*}
\end{definition}

Here \(\ms{Ty}\) is the type of module signatures, and for \(A\) a module signature \(\ms{Tm}\,A\) is the type of modules on that signature.

We may then create various type formers on \((\ms{Ty},\ms{Tm})\) by using type formers on \((\ms{Ty}_0, \ms{Tm}_0, \ms{Ty}_1, \ms{Tm}_1)\).

\begin{theorem}
  The family \((\ms{Ty}, \ms{Tm})\) has dependent function types if
  \begin{itemize}
    \item \((\ms{Ty}_0, \ms{Tm}_0)\) has dependent function types \((\Pi_0,\Pi_0/\ms{lam}, \Pi_0/\ms{app}, \Pi_0/\beta, \Pi_0/\eta)\),
    \item \((\ms{Ty}_1, \ms{Tm}_1)\) has simple function types \((\ms{Fun}, \ms{Fun}/\ms{lam}, \ms{Fun}/\ms{app}, \ms{Fun}/\beta, \ms{Fun}/\eta)\),
    \item and \((\ms{Ty}_0, \ms{Tm}_0, \ms{Ty}_1, \ms{Tm}_1)\) has universal quantification \((\forall, \forall/\ms{lam}, \forall/\ms{app}, \forall/\beta, \forall/\eta)\).
  \end{itemize}
\end{theorem}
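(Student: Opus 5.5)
We will construct \(\Pi\) on \((\ms{Ty},\ms{Tm})\) explicitly, following the Harper--Mitchell--Moggi recipe, working synthetically in the split-context signature. Given \(A = (A_0, A_1) \colon \ms{Ty}\) and \(B \colon \ms{Tm}\,A \to \ms{Ty}\), we must produce \(\Pi\,A\,B = (C_0, C_1)\). The key preliminary step extracts from \(B\) a family that depends only on the base part of the argument. Since \(\ms{Ty}\) is open-modal, it suffices to determine \(B\) under an assumption \(s \colon \base\). Under \(s\), each \(\ms{Tm}_1(A_1\,a_0)\) is closed-modal and hence contractible, so \(\ms{Tm}\,A \cong \ms{Tm}_0\,A_0\). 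This is the same argument as in the synthetic proof of Proposition~\ref{prop:scwf-pi}.

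Concretely, we will first define \(\hat{B}_0 \colon \ms{Tm}_0\,A_0 \to \ms{Ty}_0\) by \(\hat{B}_0\,a_0 = \pi_0(B\,(a_0, -))\). To make sense of this, note that \(\ms{Ty}_0\) is open-modal, so we may define the value under \(s \colon \base\), where the second component is forced to be the unique element. In the same way we obtain \(\hat{B}_1 \colon (a_0 \colon \ms{Tm}_0\,A_0) \to \ms{Tm}_0(\hat{B}_0\,a_0) \to \ms{Ty}_1\). The main lemma is then that \(B\,(a_0,a_1) = (\hat{B}_0\,a_0, \hat{B}_1\,a_0)\) for all \(a_1\). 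This holds because both sides lie in the open-modal \(\ms{Ty}\), and under \(s\) they agree by construction. I expect this step to be the main obstacle: one must check carefully that the eliminators of \(\Op\) and \(\Cl\) compute as intended and that the resulting equation is strict rather than merely propositional. I would first try to phrase it as the statement that precomposition with \(\ms{Tm}\,A \to \ms{Tm}_0\,A_0\) is an equivalence \((\ms{Tm}_0\,A_0 \to \ms{Ty}) \simeq (\ms{Tm}\,A \to \ms{Ty})\), which should follow since \(\ms{Ty}\) is \(\Op\)-modal and the map becomes an isomorphism after applying \(\Op\).

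With this in hand, we set
\begin{align*}
  &C_0 = \Pi_0\,A_0\,\hat{B}_0, \\
  &C_1\,f_0 = \forall\,(\ms{Tm}_0\text{-type of }A_0)\,\bigl(a_0 \mapsto \ms{Fun}\,(A_1\,a_0)\,(\hat{B}_1\,a_0\,(\Pi_0/\ms{app}\,f_0\,a_0))\bigr).
\end{align*}
That is, \(C_1\,f_0 = \forall\,A_0\,(\lambda a_0.\, \ms{Fun}(A_1\,a_0)(\hat{B}_1\,a_0\,(f_0\,a_0)))\). The isomorphism \(\ms{Tm}\,(\Pi\,A\,B) \cong (x \colon \ms{Tm}\,A) \to \ms{Tm}(B\,x)\) is then assembled as a chain of isomorphisms. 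Using the main lemma, the right-hand side is \((a_0, a_1) \mapsto (b_0, b_1)\) with \(b_0 \colon \ms{Tm}_0(\hat{B}_0\,a_0)\) and \(b_1 \colon \ms{Tm}_1(\hat{B}_1\,a_0\,b_0)\).

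Next we apply type-theoretic choice, splitting a function into a pair. The base component \(b_0\) cannot depend on \(a_1\): the type \(\ms{Tm}_0(\hat{B}_0\,a_0)\) is open-modal and \(\ms{Tm}_1(A_1\,a_0)\) is closed-modal, and maps from a closed-modal type into an open-modal one are constant, again by working under \(s\). This yields a pair of \(f_0 \colon (a_0 \colon \ms{Tm}_0\,A_0) \to \ms{Tm}_0(\hat{B}_0\,a_0)\) and \(f_1 \colon (a_0 \colon \ms{Tm}_0\,A_0) \to \ms{Tm}_1(A_1\,a_0) \to \ms{Tm}_1(\hat{B}_1\,a_0\,(f_0\,a_0))\). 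Finally, the isomorphisms of \(\Pi_0\), \(\ms{Fun}\), and \(\forall\) convert this pair into \(\ms{Tm}_0\,C_0 \times \ms{Tm}_1(C_1\,-)\), which is \(\ms{Tm}(\Pi\,A\,B)\).

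Reading off the two directions of this chain gives \(\ms{lam}\) and \(\ms{app}\), and \(\beta\) and \(\eta\) follow because every step is an isomorphism. Naturality is automatic, since everything is stated internally in the SOGAT framework.
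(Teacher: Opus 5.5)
Your proposal is correct and follows essentially the paper's route. You factor \(B\), and the base component of a dependent function, through \(\Op(\ms{Tm}\,A) \cong \ms{Tm}_0\,A_0\) using open-modality of the codomain, and your \(C_0, C_1\) coincide with the paper's \(\Pi_0\,A_0\,(a_0 \mapsto (B^{\Op}\,a_0)_0)\) and its \(\forall\)/\(\ms{Fun}\) formula; the one presentational difference is that you read \(\ms{lam}\) and \(\ms{app}\) off a chain of isomorphisms (which makes \(\beta\) and \(\eta\) automatic), whereas the paper writes explicit formulas and leaves \(\beta\) and \(\eta\) to the reader.
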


\begin{proof}
  We must create definitions for the terms of the dependent function type signature:
  \begin{align*}
    &\Pi \colon (A \colon \ms{Ty}) \to (\ms{Tm}\,A \to \ms{Ty}) \to \ms{Ty} \\
    &(\ms{lam}, \ms{app}, \Pi/\beta, \Pi/\eta) \colon \{A \colon \ms{Ty}\} \to \{B \colon \ms{Tm}\,A \to \ms{Ty}\} \to \\
    &\quad ((a \colon \ms{Tm}\,A) \to \ms{Tm}\,(B\,a)) \cong \ms{Tm}\,(\Pi\,A\,B)
  \end{align*}
  The essential idea is that a function \(X \to Y\) where \(Y\) is open-modal is equivalent to a function \(\Op\,X \to Y\). This is because under the open modality, \(\Op\,X \cong X\). If \(f \colon X \to Y\) with \(Y\) open-modal, we let \(f^
  {\Op} \colon \Op\,X \to Y\) be the equivalent function. Furthermore, we identify \(\Op(\ms{Tm}\,A)\) with \(\ms{Tm}_0\,A_0\).

  We may now define \(\Pi\) in the following way.
  \begin{align*}
    &(\Pi\,A\,B)_0 = \Pi_0\,A_0\,(a_0 \mapsto (B^{\Op}\,a_0)_0) \\
    &(\Pi\,A\,B)_1\,(f_0 \colon \ms{Tm}_0\,(\Pi\,A\,B)_0) = \forall\,A_0\,(a_0 \mapsto \ms{Fun}\,(A_1\,a_0)\,((B^{\Op}\,a_0)_1\,(\Pi_0/\ms{app}\,f_0\,a_0)))
  \end{align*}
  This is a rather complicated definition, but it actually corresponds precisely to the signature for dependent module functors given in \cite{harper-1990-higherorder}. To understand it, first remember \(B^{\Op} \colon \ms{Tm}_0\,A_0 \to \ms{Ty}\). Therefore
  \begin{align*}
    &(\ms{B}^{\Op}\,a_0)_0 \colon \ms{Ty}_0 \\
    &(B^{\Op}\,a_0)_1 \colon \ms{Tm}_0\,(\ms{B}^{\Op}\,a_0)_0 \to \ms{Ty}_1
  \end{align*}
  It should now be possible for the reader follow the remainder of the definition.

  Defining \(\ms{lam}\) and \(\ms{app}\) is now relatively straightforward.
  \begin{align*}
    &\ms{lam}\,(f \colon (a \colon \ms{Tm}\,A) \to \ms{Tm}\,(B\,a)) = ( \\
    &\quad \Pi_0/\ms{lam}\,(a_0 \mapsto (a \mapsto (f\,a)_0)^{\Op}\,a_0), \\
    &\quad \forall/\ms{lam}\,(a_0 \mapsto \ms{Fun}/\ms{lam}\,(a_1 \mapsto f\,(a_0,a_1)_1))) \\
    &\ms{app}\,(f \colon \ms{Tm}\,(\Pi\,A\,B))\,(a \colon \ms{Tm}\,A) = ( \\
    &\quad \Pi_0/\ms{app}\,f_0\,a_0, \\
    &\quad \ms{Fun}/\ms{app}\,(\forall/\ms{app}\,f_1\,a_0)\,a_1)
  \end{align*}
  We leave it to the reader to check \(\Pi/\beta\) and \(\Pi/\eta\).
\end{proof}

Note that the assumption that \((\ms{Ty}_0, \ms{Tm}_0)\) has dependent function types may be relaxed to the assumption that they just have simple function types if we introduce another modality to make \((\ms{Ty}_0, \ms{Tm}_0)\) simple, and then use Proposition~\ref{prop:scwf-pi}.

One may argue similarly to build \(\Sigma\) types out of record type formers on \((\ms{Ty}_0,\ms{Tm}_0,\ms{Ty}_1, \ms{Tm}_1)\). Indeed, this is the approach that Harper, Mitchell and Moggi use to define their module system; all of these rules are written out explicitly in two-context style in agonizing detail.

But this all rapidly gets rather tedious. Essentially, the point that Sterling and Harper made in Logical Relations as Types is that one can instead simply \emph{start} from the position of \((\ms{Ty} \colon \ms{Jdg}_{\Op}, \ms{Tm} \colon \ms{Ty} \to \ms{Jdg}^+)\), postulate type formers such as \(\Pi\) and \(\Sigma\) \emph{directly}, and then the exercise of justifying those type formers in terms of System F is perhaps theoretically interesting but not necessary in order to get a working type theory and programming language.

In this context, one can add problematic features ``under the closed modality'' and they may be ignored during type checking.

\begin{example}
  If we have \((\ms{Ty} \colon \ms{Jdg}_{\Op}, \ms{Tm} \colon \ms{Ty} \to \ms{Jdg}^+)\), we can postulate an indexed-type universe via
  \begin{align*}
    &\star \colon \ms{Ty} \\
    &\ms{decode} \colon \ms{Tm}\,\star \to \ms{Ty} \\
    &\ms{decode\text{-}is\text{-}closed} \colon (A \colon \ms{Tm}\,\star) \to \ms{IsEquiv}(\eta^{\Cl}_{\ms{decode}\,A})
  \end{align*}
  Then the equality of terms in \(\ms{decode}\,A\) for \(A \colon \ms{Tm}\,\star\) is irrelevant to the decision procedure for equality of types. In particular, this means we can add something like
  \begin{align*}
    &\ms{fix} \colon \{A \colon \ms{Tm}\,\star\} \to (\ms{Tm}\,(\ms{decode}\,A) \to \ms{Tm}\,(\ms{decode}\,A)) \to \ms{Tm}\,(\ms{decode}\,A) \\
    &\ms{fix\text{-}is\text{-}fix} \colon \{A \colon \ms{Tm}\,\star\} \to (f \colon \ms{Tm}\,(\ms{decode}\,A) \to \ms{Tm}\,(\ms{decode}\,A)) \to \ms{fix}\,f = f (\ms{fix}\,f)
  \end{align*}
  to the language, and the consequent undecidability of equality for general terms in the language does not matter for deciding equality of elements of \(\ms{Ty}\), because open-modality means that it suffices to decide equality of elements under the assumption of \(\base\), and under the assumption of \(\base\), \(\ms{Tm}\,(\ms{decode}\,A)\) is contractible.
\end{example}

\section{Conclusion}

\subsection{Summary}

In \S\ref{sec:hippogriff}, we presented a language with a combination of features which common wisdom around programming languages says are incompatible: decidable dependent typing and general recursion. In \S\ref{sec:implementation}, we discussed how implementing such a language was possible.

Then in \S\ref{sec:preliminaries} and \S\ref{sec:psogats} we developed mathematics which justifies theoretically the combination of dependent types and general recursion. This mathematics also provides a connection to System F; a type theory equipped with a synthetic phase distinction can be given semantics via a model of a split-context type theory, using constructions inspired by the classical description of module systems in Harper, Mitchell and Moggi \cite{harper-1990-higherorder}.

\subsection{Future work}

\subsubsection{More interesting dependent types}

We have conservatively stuck to the types in Hippogriff which can be justified by the connection to System F. However, the algorithm we use is much more powerful, in that we can put whatever classic dependent types we wish as large types. This would be like Haskell's DataKind mechanism, but with dependent types, so perhaps similar to the Dependent Haskell proposal \cite{weirich-2017-specification}.

\subsubsection{Ergonomics}

As a minimal language, Hippogriff has some ergonomic rough corners. Some of these ergonomic rough corners could simply be fixed with some syntactic sugar, but some are more fundamental.

For instance, in a typical implementation of a generically typed language, type parameters are automatically instantiated and then solved via unification; the current implementation of Hippogriff uses a bidirectional elaboration with no unification. Additionally, Hippogriff has no subtyping between module signatures, including no subtyping between \lstinline{Monoid ~ [ .t := Int ]} and \lstinline{Monoid}. Fortunately, the toolbox of techniques for dependent type theory implementation is quite large, and can be mined for solutions to these problems.

Kovacs in particular has been pushing the envelope on fast and reliable techniques for unification and implicit variables in dependent type theory \cite{kovacs-2020-elaboration,kovacs-2024-efficient,kovacs-2026-smalltt}. Unification in the Hippogriff context is interesting, because unification can only be used to recover elements of \emph{open-modal} types, as once the opaque element \(\blacksquare\) is around we cannot readback a value to syntax.

For record subtyping, Sterling has reviewed different options for \emph{coherent coercions} between record types in \cite{sterling-2025-labelled}, identifying recent work by Sakaguchi as a promising approach \cite{sakaguchi-2023-refinement}.

\subsubsection{Phase distinction for refinement types}

We have mentioned before that predicate logic is another example of a split-context type theory. One could create a dependent type theory similar to Hippogriff where the ``small types'' were propositions that were erased at runtime. This would be a ``refinement type'' system \cite{freeman-1991-refinement,vazou-2014-refinement} similar to Ghalayini and Krishnaswami's Explicit Refinement Types \cite{ghalayini-2023-explicit}. The modality here would give a synthetic account of the erasure of proofs in Explicit Refinement Types.

It would be interesting to try and combine both phase distinctions, to create a modular language that also had refinement types. Because checking the propositions for refinement types requires a finer-grained notion of equality than what is needed for type checking in Hippogriff, our approach to problematic language features via the opaque value would no longer work here. Further work would be needed to understand how general recursion could work in a context like this.

\subsubsection{Metatheorems via relative induction}

Currently, while Hippogriff's type system has a formal description and mathematical semantics, there is no proof beyond ``proof by implementation'' that the algorithm for deciding type equality actually terminates.

Bocquet has recently developed techniques for short, synthetic proofs of correctness for normalization by evaluation under the heading of ``relative induction'' \cite{bocquet-2025-relative}.

Proving normalization for Hippogriff would require two extensions of relative induction. The first extension would be to integrate the open modality into relative induction, as the result we care about is normalization under the open modality, not general normalization.

The second extension would be integrating guarded recursion into relative induction, so that we could formally tackle the statement that ``asking a type for its behavior always terminates.''

For both of these, a theory of ``modal SOGATs'' is required; the present work serves at one motivation for such a development, but such a theory would be additionally useful for many other applications.

\bibliographystyle{plainurl}
\bibliography{main}

\appendix
\section*{Supplemental appendix: the full type theory for Hippogriff}
In this appendix, we define the type theory for Hippogriff as a SOGAT with phase distinction alongside code samples that typecheck and run in the current Hippogriff implementation. We assume various sugar such as inductive datatypes as arguments (which should be thought of as postulating a \emph{family} of rules, useful for defining n-ary type constructors like records) and implicit arguments. These will be explained as they are used.

\section{Case and the basic judgments}

Nominal types are an essential tool for organizing a code base and for providing readable error messages; any programmer would prefer to see a type error mentioning \lstinline|List Int| rather than \lstinline!fix A. ('nil | 'cons Int A)!. Hippogriff uses an approach to nominality that was learned from the source code to Narya \cite{shulman-2026-narya}.\footnote{What we call ``case'', narya calls ``energy'', and we use nominative/descriptive instead of kinetic/potential.} We start off by parametrizing our standard representable family by \emph{case}. We give \(\ms{Case}\) as an inductive type, but we only ever consider maps out of it, so it does not actually increase the power of our language, it is merely a convenience feature.
\begin{gather*}
  \mb{data} \:\: \ms{Case}\colon \ms{Set} \:\: \mb{where} \\
  \quad \ms{nominative} \colon \ms{Case} \\
  \quad \ms{descriptive} \colon \ms{Case} \\
  \\
  \ms{Ty} \colon \ms{Case} \to \ms{Jdg}_{\Op} \\
  \ms{Ty}_n := \ms{Ty}\,\ms{nominative} \\
  \ms{Ty}_d := \ms{Ty}\,\ms{descriptive} \\
  \ms{El} \colon \ms{Case} \to \ms{Ty}_n \to \ms{Jdg}^+ \\
  \ms{El}_n := \ms{El}\,\ms{nominative} \\
  \ms{El}_d := \ms{El}\,\ms{descriptive}
\end{gather*}
The idea here is that we have a normal type theory for \((\ms{Ty}_n, \ms{El}_n)\). However, at certain points we introduce fresh terms that \emph{behave} in a certain way (in the sense of having certain universal properties). This is done in the following way (we reuse the same notation for types and elements).
\begin{gather*}
  \_\BehavesAs\_ \colon \ms{Ty}_n \to \ms{Ty}_d \to \ms{Jdg} \\
  \_\BehavesAs\_ \colon \{A \colon \ms{Ty}_n\} \to \ms{El}_n\,A \to \ms{El}_d\,A \to \ms{Jdg}^+
\end{gather*}

The idea is that instead of the top-level for Hippogriff being the empty context, the top level is a bunch of bound variables with descriptions, e.g \(x_0 \colon A, \_ \colon x_0 \BehavesAs a_0, \ldots, x_n \colon A_n, \_ \colon x_n \BehavesAs a_n\). In the implementation the descriptions are in fact more tightly bound to the variables, but for the specification it suffices to have the description as auxillary terms.

In order to support definitions which are not nominal, we use the following construct.
\begin{gather*}
  \ms{realize} \colon \{A \colon \ms{Ty}_n\} \to \ms{El}_n\,A \to \ms{El}_d\,A \\
  \ms{realize\text{-}behavior} \colon \{A \colon \ms{Ty}_n\} \to (a_0\,a_1 \colon \ms{El}_n\,A) \to (a_0 \BehavesAs \ms{realize}\,a_1) \to a_0 = a_1
\end{gather*}

\begin{example}
  In the following code snippet, we define \lstinline|MyInt| to be \lstinline|Int| directly, and the elaborator inserts a realize node automatically. Then later on, \lstinline|MyInt| is just an alias of \lstinline|Int|.
  \begin{lstlisting}
  def MyInt : Type := Int

  def MyInt/make (x : Int) : MyInt := x
  \end{lstlisting}
\end{example}

\section{Levels and universes}

Hippogriff is meant to be a modular programming language, so we need a ``type of small types.'' However, it turns out to be convenient to go beyond this and also have a ``type of large types''; this makes it easier to support features like module signatures with arguments. We formalize this in the following way, inspired by Sterling's fuss-free universes \cite{sterling-2025-fussfree}. First, we introduce a poset of levels.
\begin{gather*}
  \mb{data} \:\: \ms{Level}\colon \ms{Set} \:\: \mb{where} \\
  \quad \ms{small} \colon \ms{Level} \\
  \quad \ms{large} \colon \ms{Level} \\
  \quad \ms{top} \colon \ms{Level} \\
  \\
  \_\leq\_ \colon \ms{Level} \to \ms{Level} \to \ms{Prop} \\
  \ms{small} \leq \_ := \top \\
  \ms{large} \leq \ms{small} := \bot \\
  \ms{large} \leq \_ := \top \\
  \ms{top} \leq \ms{small} := \bot \\
  \ms{top} \leq \ms{large} := \bot \\
  \ms{top} \leq \ms{top} := \top
\end{gather*}
Then, we have a judgment which says that a type sits at a certain level. This is functorial in the level, so if \(A@\ms{small}\) then \(A@\ms{large}\).
\begin{gather*}
  \_@\_ \colon \{c \colon\ms{Case}\} \to \ms{Ty}\,e \to \ms{Level} \to \ms{Prop}\\
  \ms{@/functorial} \colon \{c \colon\ms{Case}\} \to \{\ell_0\,\ell_1 \colon \ms{Level}\} \to \ell_0 \leq \ell_1 \to \{A \colon \ms{Ty}\,e\} \to A@\ell_0 \to A@\ell_1
\end{gather*}
Finally, we internalize types at certain levels into terms via universes.
\begin{gather*}
  \mb{data} \:\: \ms{Universe} \colon \ms{Level} \to \ms{Level} \to \ms{Set} \: \: \mb{where} \\
  \quad \ms{Universe/small} \colon \ms{Universe}\,\ms{small}\,\ms{large} \\
  \quad \ms{Universe/large} \colon \ms{Universe}\,\ms{large}\,\ms{top} \\
  \\
  \ms{U} \colon \{\ell_0\,\ell_1 \colon \ms{Level}\} \to \ms{Universe}\,\ell_0\,\ell_1 \to \ms{Ty}_n \\
  \ms{U/level} \colon \{\ell_0\,\ell_1 \colon \ms{Level}\} \to (u \colon \ms{Universe}\,\ell_0\,\ell_1) \to (\ms{U}\,u)@\ell_1 \\
  (\ms{code},\ms{decode},\ms{U}/\beta,\ms{U}/\eta) \colon \{\ell_0\,\ell_1 \colon \ms{Level}\} \to \{u \colon \ms{Universe}\,\ell_0\,\ell_1\} \to \{c \colon\ms{Case}\} \to \\
  \quad (A \colon \ms{Ty}\,e)\times (A@\ell_0) \cong \ms{El}\,c\,(\ms{U}\,u)
\end{gather*}
The two parameters to \(\ms{Universe}\) are the levels that the universe \emph{decodes into} and \emph{lives at}, respectively. So, for instance, \(\ms{U}\,\ms{Universe/small}\) stores codes for small types (hence decodes into small types) and is a large type.

\begin{example}
  In Hippogriff, the appropriate coding and decoding is done automatically by the elaborator. For example, consider the polymorphic identity.
  \begin{lstlisting}
    def id (A : Type) (a : A) : A := a
  \end{lstlisting}
  The term \lstinline|A| of type \lstinline|Type| (which is the syntax for the small universe) is automatically decoded into a type when used on the right hand side of a colon.

  In fact, all syntax in Hippogriff is elaborated as terms; there is no direct syntax for types. This makes the elaborator much easier to write, because elaborating a type is merely checking at a universe. Elements of the large universe are introduced with \lstinline|theory|
  \begin{lstlisting}
  theory Eq := sig
    t : Type
    eq : t -> t -> t
  end\end{lstlisting}
  The only difference between \lstinline|def| and \lstinline|theory| is that \lstinline|def| has a return type annotation. If we allowed \lstinline|Theory| as a term, we could have
  \begin{lstlisting}
  def Eq : Theory := sig
    t : Type
    eq : t -> t -> t
  end\end{lstlisting}
  However, that would either require \lstinline|Theory| to be an element of a further universe (which seems overkill), or mean that type elaboration could not be implemented by checking at a universe, and in any case it seems good to have an indication that a top-level binding is a theory.
\end{example}

We have to also say how decoding interacts with behavior.
\begin{gather*}
  ({\BehavesAs}/\ms{code},{\BehavesAs}/\ms{decode},{\BehavesAs}/\ms{U}/\beta, {\BehavesAs}/\ms{U}/\eta) \colon \{\ell_0\,\ell_1 \colon \ms{Level}\} \to \{u \colon \ms{Universe}\,\ell_0\,\ell_1\} \to \\
  \quad (c \colon \ms{El}_n\,(\ms{U}\,u)) \to (c' \colon \ms{El}_d\,(\ms{U}\,u)) \to (\ms{decode}\,c) \BehavesAs (\ms{decode}\,c') \cong c \BehavesAs c'
\end{gather*}

Finally, we add an axiom that says that the elements of a small type are closed-modal.
\begin{gather*}
  \ms{Small\text{-}dynamic} \colon \{A \colon \ms{Ty}_n\} \to \{c \colon \ms{Case}\} \to A@\ms{small} \to \ms{IsEquiv}\,\eta_{\ms{El}\,c\,A}^{\Cl}
\end{gather*}

\section{Dependent function types}

Function types are more or less the standard thing, adapted to support both nominative and descriptive elements.
\begin{gather*}
  \Pi \colon (A \colon \ms{Ty}_n) \to (\ms{El}_n\,A \to \ms{Ty}_n) \to \ms{Ty}_n \\
  (\ms{lam},\ms{app},\Pi/\beta,\Pi/\eta) \colon \{A \colon \ms{Ty}_n\} \to \{B \colon \ms{El}_n\,A \to \ms{Ty}_n\} \to \{c \colon\ms{Case}\} \to \\
  \quad ((a \colon \ms{El}_n\,A) \to \ms{El}\,c\,(B\,a)) \cong \ms{El}\,c\,(\Pi\,A\,B)
\end{gather*}
We then need some axioms for how application interacts with \(\BehavesAs\).
\begin{gather*}
  ({\BehavesAs}/\ms{lam},{\BehavesAs}/\ms{app},{\BehavesAs}/\Pi/\beta,{\BehavesAs}/\Pi/\eta) \colon \{A \colon \ms{Ty}_n\} \to \{B \colon \ms{El}_n\,A \to \ms{Ty}_n\} \to \{c \colon\ms{Case}\} \to \\
  \quad (f \colon \ms{El}_n\,(\Pi\,A\,B)) \to (f' \colon \ms{El}_d\,(\Pi\,A\,B)) \to
  ((a \colon \ms{El}_n\,A) \to \ms{app}\,f\,a \BehavesAs \ms{app}\,f'\,a) \cong f \BehavesAs f'
\end{gather*}

\begin{example}
  After evaluating the definition
  \begin{lstlisting}
  def Maybe (A : Type) : Type := sum
    'just A
    'nothing
  end\end{lstlisting}
  we have \lstinline|Maybe| bound to a neutral of type \lstinline|Type -> Type| with behavior given by the lambda
  \begin{lstlisting}
  A => sum
    'just A
    'nothing
  end\end{lstlisting}
  Following \({\BehavesAs}/\ms{app}\), the behavior of \lstinline|Maybe Int| is then \lstinline|sum { 'just Int; 'nothing }|.
\end{example}

Finally, we must give an axiom that characterizes the level of \(\Pi\) types. As discussed in the main paper the level of a \(\Pi\) type is the level of its codomain.
\begin{gather*}
  \Pi/\ms{level} \colon \{A \colon \ms{Ty}_n\} \to \{B \colon \ms{El}\,A \to \ms{Ty}_n\} \to \{\ell \colon \ms{Level}\} \to \\
  \quad ((x \colon \ms{El}\,A) \to (B\,a)@\ell) \to (\Pi\,A\,B)@\ell
\end{gather*}

\section{Record types}

Typically in dependent type theory papers, only the rules for binary \(\Sigma\) types are presented and the extension to general records is left as an exercise. Fortunately, one advantage of working in a rich metatheory is that we can do general record types without needing a careful use of imprecise ellipses.

Assume that \(\ms{Name}\) is some fixed set, such as the set of non-empty utf8 strings. In fact, for our purposes, it could well be the unit type; the task of determining from user syntax which field is meant is a task for the elaborator, not for the core language.

We then define records as an internalization of telescopes. Note that the use of the inductive set \(\ms{Names}\) is only in argument position, so this can really be thought of as an collection of axiom families, one for each list of names.
\begin{gather*}
  \mb{data}\:\:\ms{Names} \colon \ms{Set}\:\:\mb{where} \\
  \quad \ms{empty} \colon \ms{Names} \\
  \quad \ms{cons} \colon \ms{Name} \to \ms{Names} \to \ms{Names}
\end{gather*}
\begin{gather*}
  \ms{Tele} \colon \ms{Names} \to \ms{Jdg} \\
  \ms{Tele}\,\ms{empty} := [] \\
  \ms{Tele}\,(\ms{cons}\,x\,\mi{xs}) := [\ms{head} \colon \ms{Ty}_n, \ms{tail} \colon \ms{El}_n\,\ms{head} \to \ms{Tele}\,\mi{xs}] \\
  \\
  \ms{Record} \colon \{\mi{xs} \colon \ms{Names}\} \to \ms{Tele}\,\mi{xs} \to \ms{Ty}_d 
\end{gather*}
\begin{gather*}
  \ms{Elts} \colon (c \colon\ms{Case}) \to \{\mi{xs} \colon \ms{Names}\} \to \ms{Tele}\,\mi{xs} \to \ms{Jdg}^+ \\
  \ms{Elts}\,\_\,\{\ms{empty}\}\,\_ = [] \\
  \ms{Elts}\,c\,\{\ms{cons}\,x\,\mi{xs}\}\,t = [\ms{head} \colon \ms{El}\,c\,t.\ms{head}, \ms{tail} \colon \ms{Elts}\,c\,(t.\ms{tail}\,\ms{head})] \\
  \\
  (\ms{struct},\ms{destruct},\ms{Record}/\beta,\ms{Record}/\eta) \colon \{c \colon\ms{Case}\} \to \{t \colon \ms{Tele}_n\} \to \{A \colon \ms{Ty}_n\} \to \\
  \quad \{A\BehavesAs^{\ms{Ty}}\ms{Record}\,t\} \to \ms{Elts}\,c\,t \cong \ms{El}\,c\,A
\end{gather*}
In this scheme, \(\ms{Record}\) returns a descriptive type; this means that records are only allowed to be nominal. Because of this, the universal property takes in an arbitrary type with a witness that that type behaves like a record. It would not be greatly different to also allow ``anonymous'' record types; before the introduction of the nominative/descriptive system in Hippogriff this was how record types worked.

\begin{example}
  Consider the following code.
  \begin{lstlisting}
  theory PointedType := sig
    t : Type
    pt : t
  end

  def Unit : PointedType := struct
    t := sum
      'tt
    end
    pt := 'tt
  end
  \end{lstlisting}
  This declares a new large type that behaves like a record type, and then an element of that large type whose first field behaves like a singleton, and whose second field is the unique point in that singleton. This crucially relies on the ability to form a descriptive struct; if structs were only nominative then they could not have nominal fields.
\end{example}

We then specify how the eliminator interacts with \(\BehavesAs\).
\begin{gather*}
  {\BehavesAs}/\ms{Elts} \colon \{\mi{xs} \colon \ms{Names}\} \to \{t \colon \ms{Tele}\,\mi{xs}\} \to \ms{Elts}\,\ms{nominative}\,t \to \ms{Elts}\,\ms{descriptive}\,t \to \ms{Jdg}^+ \\
  {\BehavesAs}/\ms{Elts}\,\{\ms{empty}\}\,\_\,\_ = []\\
  {\BehavesAs}/\ms{Elts}\,\{\ms{cons}\,x\,\mi{xs}\}\,e\,e' = [\ms{head} \colon e.\ms{head} \BehavesAs e'.\ms{head}, \ms{tail} \colon {\BehavesAs}/\ms{Elts}\,c.\ms{tail}\,e'.\ms{tail}]\\
  \\
  ({\BehavesAs}/\ms{struct},{\BehavesAs}/\ms{destruct},{\BehavesAs}/\ms{Record}/\beta,{\BehavesAs}/\ms{Record}/\eta) \colon \{c \colon\ms{Case}\} \to \{t \colon \ms{Tele}_n\} \to \\
  \quad \{A \colon \ms{Ty}_n\} \to \{A \BehavesAs \ms{Record}\,t\} \to (a \colon \ms{El}_n\,A) \to (a' \colon \ms{El}_d\,A) \to \\
  \quad {\BehavesAs}/\ms{Elts}\,(\ms{destruct}\,a)\,(\ms{destruct}\,a') \cong a \BehavesAs a'
\end{gather*}

And finally, we specify what level a record type lives at; if all of the fields of a record are at a level then the record is at that level.
\begin{gather*}
  @/\ms{Tele} \colon \{\mi{xs} \colon \ms{Names}\} \to \ms{Tele}\,xs \to \ms{Level} \to \ms{Prop} \\
  @/\ms{Tele} \{\ms{empty}\}\,\_\,\_ = \top \\
  @/\ms{Tele} \{\ms{cons}\,x\,\ms{xs}\}\,t\,\ell = [\ms{head} \colon t.\ms{head}@\ell, \ms{tail} \colon @/\ms{Tele}\,t.\ms{tail}\,\ell] \\
  \\
  \ms{Record}/\ms{level} \colon \{\ell \colon \ms{Level}\} \to \{\mi{xs} \colon \ms{Names}\} \to (t \colon \ms{Tele}\,xs) \to (A \colon \ms{Ty}_n) \to \\
  \quad (A \BehavesAs \ms{Record}\,t) \to @/\ms{Tele}\,t\,\ell \to A@\ell
\end{gather*}

\section{Static extent}

The static extent is the approach to singleton types in the context of the synthetic phase distinction. In practice, we integrate the static extent into the record type, but it is easier to specify it separately.

\begin{gather*}
  \ms{Extent} : (A \colon \ms{Ty}_n) \to (a \colon \ms{El}_n\,A) \to \ms{Ty}_n \\
  (\ms{in}, \ms{out}, \ms{Extent}/\beta, \ms{Extent}/\eta) \colon \{A \colon \ms{Ty}_n\} \to \{a \colon \ms{El}_n\,A\} \to \\
  \quad (a' \colon \ms{El}_n\,A) \times \Op(a = a') \cong \ms{El}_n\,(\ms{Extent}\,A\,a)
\end{gather*}

What this says quite directly is that \(\ms{Extent}\,A\,a\) are the elements of \(A\) that are equal to \(a\) under the open modality.

\section{Sum types}

Sum types differ in several ways from the three types (universes, dependent function types, record types) that we have considered so far. First of all, if we are to sustain the static phase distinction, we must be very careful about levels. Specifically, as small types are statically contractible, we must be careful that we can only eliminate into other statically contractible types.

\begin{example}
  In Hippogriff, it is perfectly valid to have a function \lstinline|LandOrSea -> Type|, where
  \begin{lstlisting}
  def LandOrSea : Type := sum
    'land
    'sea
  end\end{lstlisting}
  However, it is a key invariant of the system that the only functions that one can actually write down of that type are constant. Specifically, a function
  \begin{lstlisting}
  def paul-revere (method : LandOrSea) : Type := match method
    'land => Unit
    'sea => Bool
  end
  \end{lstlisting}
  would be rejected with the error message ``cannot eliminate from a sum type into the large type \lstinline|Type|.''
\end{example}

While this means that Hippogriff is potentially vulnerable to British surprise attacks, without this restriction the interaction between sum types, the ``type judgment is static'' axiom and the ``small types are statically contractible'' judgments would end trivializing the system, in the sense of making all types equal.

With that preamble out of the way, we get to the axioms for sum types. Just like record types were parameterized by a list of names, sum types are parameterized by a shape.
\begin{gather*}
  \mb{data}\:\:\ms{Shape} \colon \ms{Set} \:\: \mb{where} \\
  \quad \ms{empty} \colon \ms{Shape} \\
  \quad \ms{cons} \colon \ms{Name} \to \ms{Nat} \to \ms{Shape} \to \ms{Shape} \\
  \\
  \ms{Ty}_e^{\ms{small}} := (A \colon \ms{Ty}_e) \times (A@\ms{small}) \\
  \\
  \ms{STele} \colon \ms{Nat} \to \ms{Jdg} \\
  \ms{STele}\,0 = [] \\
  \ms{STele}\,(n + 1) = [\ms{head} \colon \ms{Ty}_n^{\ms{small}}, \ms{tail} \colon \ms{STele}\,n]
\end{gather*}
\(\ms{STele}\) stands for simple telescope; unlike a normal telescope the types in this telescope are all small and don't depend on each other. We can then define the input to the sum type former as follows.
\begin{gather*}
  \ms{Variants} \colon \ms{Shape} \to \ms{Jdg} \\
  \ms{Variants}\,\ms{empty} := [] \\
  \ms{Variants}\,(\ms{cons}\,x\,n\,s) := [\ms{head} \colon \ms{STele}\,n, \ms{tail} \colon \ms{Variants}\,s] \\
  \\
  \ms{Sum} \colon \{s \colon \ms{Shape}\} \to \ms{Variants}\,s \to \ms{Ty}_d^{\ms{small}}
\end{gather*}
In order to give the introduction rule for \(\ms{Sum}\), we muse make some auxiliary definitions which describe a well-formed application of a tag in the sum type to its proper arguments, according to a \(\ms{Variants}\) specification.
\begin{gather*}
  \ms{data}\:\:\ms{Tag} \colon \ms{Shape} \to \ms{Set} \:\: \mb{where} \\
  \quad \ms{here} \colon \ms{Tag}\,(\ms{cons}\,x\,n\,s) \\
  \quad \ms{there} \colon \ms{Tag}\,s \to \ms{Tag}\,(\ms{cons}\,x\,s) \\
  \\
  \ms{SElts} \colon \{n \colon \ms{Nat}\} \to \ms{STele}\,n \to \ms{Jdg}^+ \\
  \ms{SElts}\,\{0\}\,\_\,\_ := [] \\
  \ms{SElts}\,\{n+1\}\,e\,t := [\ms{head} \colon \ms{El}_n\,t.\ms{head}, \ms{tail} \colon \ms{SElts}_n\,t.\ms{tail}] \\
  \\
  \ms{Args} \colon \{s \colon \ms{Shape}\} \to (t \colon \ms{Tag}\,s) \to \ms{Variants}\,s \to \ms{Jdg}^+ \\
  \ms{Args}\,\ms{here}\,(\ms{cons}\,x\,n\,\_)\,v := \ms{SElts}\,v.\ms{head} \\
  \ms{Args}\,(\ms{there}\,t)\,(\ms{cons}\,\_\,\_\,s)\,v := \ms{Args}\,t\,s\,v.\ms{tail}
\end{gather*}
Finally, we have the introduction, elimination, and computation rules. These are in a different form from other types that we have considered. This is because in other types the rules formed an isomorphism, but SOGATs do not permit a function \emph{into} a meta-level sum type, only out of one (because a function out of a meta-level sum type is really just a collection of functions, using the universal property). Thus, we cannot have \(\ms{El}_n\,A \cong (t \colon \ms{Tag}\,s) \times \ms{Args}\,t\,v\). Instead, the eliminator takes in a continuation out of \((t \colon \ms{Tag}\,s) \times \ms{Args}\,t\,v\) which is equivalent to providing a function for each tag.
\begin{gather*}
  \ms{tag} \colon \{s \colon \ms{Shape}\} \to \{v \colon \ms{Variants}\,s\} \to \{A \colon \ms{Ty}_n\} \to \{A \BehavesAs \ms{Sum}\,v\} \to \\
  \quad (t \colon \ms{Tag}\,s) \to \ms{Args}\,t\,v \to \ms{El}_n\,A \\
  \ms{match} \colon \{s \colon \ms{Shape}\} \to \{v \colon \ms{Variants}\,s\} \to \{A \colon \ms{Ty}_n\} \to \{A \BehavesAs \ms{Sum}\,v\} \to \\
  \quad \ms{El}_n\,A \to \{B \colon \ms{Ty}_n^{\ms{small}}\} \to ((t \colon \ms{Tag}\,s) \to \ms{Args}\,t\,v \to \ms{El}_n\,B) \to \ms{El}_n\,B \\
  \ms{Sum}/\beta \colon \{s \colon \ms{Shape}\} \to \{v \colon \ms{Variants}\,s\} \to \{A \colon \ms{Ty}_n\} \to \{A \BehavesAs \ms{Sum}\,v\} \to \\
  \quad (B \colon \ms{Ty}_n^{\ms{small}}) \to (f \colon (t \colon \ms{Tag}\,s) \to \ms{Args}\,t\,v \to \ms{El}_n\,B) \to (t \colon \ms{Tag}\,s) \to (a \colon \ms{Args}\,t\,v) \to \\
  \quad \ms{match}\,(\ms{tag}\,t\,a)\,f = f\,t\,a
\end{gather*}

\begin{example}
  We can possibly still give warning about the British as long as it is acceptable to return numerals rather than cardinals. We use \lstinline|>| and \lstinline|=| to indicate the input and output of a command at a REPL.
  \begin{lstlisting}
  def paul-revere (method : LandOrSea) : Int := match method
    'land => 1
    'sea => 2
  end

  > paul-revere 'land
  = 1
  \end{lstlisting}
\end{example}

\section{Abandonment}

In ``The Error Model,'' Joe Duffy argues that there are (at least) two types of errors that one encounters in general-purpose programming \cite{duffy-2016-error}. One type of error indicates the failure of an operation that is known to possibly fail, such as the non-existence of a file whose path the user typed in. The other type of error indicates that an invariant that would have been maintained if the rest of the program were correct has in fact not been maintained. Duffy argues that often the only sensible thing to do in this situation is halt the program as soon as possible, as once an error of this sort is detected the extent of the bug is unknowable and therefore recovery is impossible. Duffy calls this kind of halting ``abandoning.'' Of course, there are situations in which a higher-level process can reset to a known-good state, but in general abandonment is an essential component of software that is not completely statically checked. As with current technology, full verification of every interesting invariant that might appear in a general-purpose program is at least a burden to the developer, if not impossible, abandonment is an essential component of a general-purpose programming language.

Fortunately, unlike in a typical dependently typed language, in Hippogriff abandonment does not compromise type checking, because we restrict abandonment to small types and so abandonment is statically evaluated to the opaque element, just like any other operation.

The rule for abandonment is very simple.
\begin{gather*}
  \ms{abandon} \colon \{A \colon \ms{Ty}_n^{\ms{small}}\} \to \ms{El}_n\,A
\end{gather*}

\begin{example}
  We can define an analogue of Rust's \lstinline|.unwrap()| operation as follows.
  \begin{lstlisting}
  def unwrap (A : Type) (ma : Maybe A) : A := match ma
    'just a => a
    'nothing => abandon
  end
  \end{lstlisting}
\end{example}

\section{Recursion}

This section is relatively more conjectural than the rest. The techniques for recursion described here are implemented in Hippogriff and seem to work, but require a slightly more advanced metatheory to describe. We include it in the current paper as a demonstration of the kind of thing that the phase distinction could enable.

Recursion in Hippogriff is based on \emph{guarded recursion}. To do this, we assume an applicative modality \(\later\) on judgments. Our contexts then look like \(x \colon A, x \BehavesAs f (\ms{pure}\,x)\), where \(f \colon \later A \to A\) is the function elaborated for each top-level definition.

Then we introduce two ways of ``unguarding.'' The first way is by modifying the definition of \(\ms{Variants}\) for sum types to take a delayed simple telescope.
\begin{gather*}
  \ms{Variants} \colon \ms{Shape} \to \ms{Jdg} \\
  \ms{Variants}\,\ms{empty} := [] \\
  \ms{Variants}\,(\ms{cons}\,x\,n\,s) := [\ms{head} \colon \later(\ms{STele}\,n), \ms{tail} \colon \ms{Variants}\,s]
\end{gather*}
This allows recursive use of the top-level definition that we are defining, so long as we are producing a type for one of the constructor arguments in a sum type.
\begin{example}
  Mutual recursion is achieved via recursion at a record type.
  \begin{lstlisting}
  theory EvenAndOdd/typeof := sig
    even : Type
    odd : Type
  end

  def EvenAndOdd : EvenAndOdd/typeof := struct
    even := sum
      'zero
      'succ EvenAndOdd.odd
    end

    odd := sum
      'succ EvenAndOdd.even
    end
  end

  def three : EvenAndOdd.odd := 'succ ('succ ('succ 'zero))
  \end{lstlisting}
\end{example}

The second is we always allow recursion at a small type. Just like with abandonment, this does not impact type checking because any recursion is just evaluated to the opaque element.
\begin{gather*}
  \ms{rec} \colon \{A \colon \ms{Ty}_n^{\ms{small}}\} \to \later (\ms{El}_n\,A) \to \ms{El}_n\,A
\end{gather*}

\begin{example}
  We can define \lstinline|map| for lists in the following way.
  \begin{lstlisting}
  def map (A : Type) (B : Type) (f : A -> B) (xs : List A) : List B := match xs
    'empty => 'empty
    'cons x xs => 'cons (f x) (rec (map A B f xs))
  end

  > map Int Int (n => n + 1) ('cons 1 ('cons 2 'empty))
  = 'cons 2 ('cons 3 'empty)
  \end{lstlisting}
\end{example}

\end{document}